\documentclass[sigconf,final]{acmart}

\copyrightyear{2023}
\acmYear{2023}
\setcopyright{rightsretained}
\acmConference[ISSAC 2023]{International Symposium on Symbolic and Algebraic Computation 2023}{July 24--27, 2023}{Troms\o, Norway}
\acmBooktitle{International Symposium on Symbolic and Algebraic Computation 2023 (ISSAC 2023), July 24--27, 2023, Troms\o, Norway}\acmDOI{10.1145/3597066.3597091}
\acmISBN{979-8-4007-0039-2/23/07}

\usepackage{amsmath}

\usepackage{mathtools}
\mathtoolsset{showonlyrefs,showmanualtags}

\usepackage{tikz}
\usepackage{enumitem}

\overfullrule=1ex

\let\set\mathbb
\def\<#1>{\langle#1\rangle}

\def\i{\mathrm{i}}

\def\O{\operatorname{O}}

\DeclareMathOperator{\lclm}{lclm}
\DeclareMathOperator{\ord}{ord}
\DeclareMathOperator{\Span}{span}

\newtheorem{thm}{Theorem}
\newtheorem{prop}[thm]{Proposition}

\newtheorem{lem}[thm]{Lemma}
\newtheorem{ex}[thm]{Example}
\newtheorem{defi}[thm]{Definition}
\newtheorem{algo}[thm]{Algorithm}

\newtheorem{question}[thm]{Question}

\def\eatspace#1{#1}
\def\step#1#2{%
  \par\kern1pt\dimen144=#2em\advance\dimen144by1.67em
  \hangindent=\dimen144\hangafter=1
  \noindent\rlap{\small#1}\kern\dimen144\relax\eatspace}

\newcommand{\NN}{\set N}
\newcommand{\ZZ}{\set Z}
\newcommand{\QQ}{\set Q}

\begin{document}
\fancyhead{}
\title{Transcendence Certificates for D-finite Functions}

\thanks{M.~Kauers was supported by the Austrian FWF grants I6130-N and P31571-N32. C.~Koutschan  was supported by the Austrian FWF grant I6130-N. T.~Verron was supported by the Austrian FWF grants P31571-N32 and P34872-N.}

 \author[M. Kauers]{Manuel Kauers}
\affiliation{%
  \institution{Institute for Algebra\\Johannes Kepler University}
  \city{4040 Linz}
  \country{Austria}
}
\email{manuel.kauers@jku.at}

\author[C. Koutschan]{Christoph Koutschan}
\affiliation{%
  \institution{RICAM\\Austrian Academy of Sciences}
  \city{4040 Linz}
  \country{Austria}
}
\email{christoph.koutschan@oeaw.ac.at}

\author[T. Verron]{Thibaut Verron}
\affiliation{%
  \institution{Institute for Algebra\\Johannes Kepler University}
  \city{4040 Linz}
  \country{Austria}
}
\email{thibaut.verron@jku.at}

\begin{abstract}
  Although in theory we can decide whether a given D-finite function
  is transcendental, transcendence proofs remain a challenge in practice.
  Typically, transcendence is certified by
  checking certain incomplete sufficient conditions. In this paper
  we propose an additional such condition which catches some cases
  on which other tests fail. 
\end{abstract}
\begin{CCSXML}
	<ccs2012>
	<concept>
	<concept_id>10010147.10010148.10010149.10010150</concept_id>
	<concept_desc>Computing methodologies~Algebraic algorithms</concept_desc>
	<concept_significance>500</concept_significance>
	</concept>
	</ccs2012>
\end{CCSXML}

\ccsdesc[500]{Computing methodologies~Algebraic algorithms}

\keywords{D-finite functions, algebraic functions, integral bases}
\maketitle

%
%

 

\section{Introduction}

 An algebraic function is a quantity $y$ for which there are
 polynomials $u_0,\dots,u_d$, not all zero, such that
 \[
   u_0(x) + u_1(x)y + \cdots + u_d(x)y^d = 0.
 \]
 A D-finite function is a quantity $y$ for which there are
 polynomials $p_0,\dots,p_r$, not all zero, such that
 \[
   p_0(x)y + p_1(x)y' + \cdots + p_r(x)y^{(r)} = 0.
 \]
 As recognized by Abel, every algebraic function is also D-finite, and it is not
 hard to construct a differential equation from a known polynomial equation.
 The other direction is much more difficult, as a given differential equation
 may or may not have any algebraic solutions.
 The problems of finding out whether a given differential equation has \emph{some} (nonzero)
 algebraic solutions, and finding out whether a \emph{given} power series solution
 of a given differential equation is algebraic can be reduced to the problem of
 finding out whether a given differential equation has \emph{only} algebraic solutions,
 using operator factorization~\cite{put03} or minimization techniques~\cite{bostan22},
 respectively.
 
 The problem to decide whether a given differential equation admits \emph{only}
 algebraic solutions has received a lot of attention since the 19th century,
 when Schwarz, Klein, Fuchs and others studied the problem for equations
 with $r=2$~\cite{gray86}, but even this special case was not fully understood until
 Baldassari and Dwork~\cite{baldassari79} gave a complete decision procedure in 1979.
 Only a year later, Singer~\cite{singer79} offered an algorithm that applies to equations
 of arbitrary order~$r$. His algorithm is, however, only of theoretical interest, as it
 relies on solving a nonlinear system of algebraic equations whose number of variables
 is determined by a group-theoretic bound involving the term $(49r)^{r^2}$. This is
 far from feasible, even for $r=2$.
 However, in practice, for small orders, the bound can be refined, leading to more practical algorithms.
 This has been done for order 2~\cite{kovacic86,SingerUlmer93}, order 3~\cite{SingerUlmer93,Ulmer05} and orders 4 and 5~\cite{Cormier01}.
 The problem remains difficult beyond those known cases.

 If a differential equation has only algebraic solutions, their minimal polynomials
 are not difficult to find. One way is to compute a truncated power series solution
 of the differential equation and then use linear algebra or Hermite-Pad\'e approximation~\cite{beckermann94}
 to find a candidate annihilating polynomial. From the first $N$ terms of a series
 solution, we can reliably detect annihilating polynomials of degrees $d_x,d_y$ with
 $(d_x+1)(d_y+1)<N$. The correctness of such a candidate can
 be checked by computing the differential equation satisfied by the solution of the
 candidate equation and comparing it with the input equation. If they do not match,
 or if no candidate equation is found, repeat the procedure with a higher truncation
 order $N$ and higher degrees $d_x,d_y$. Eventually, the correct minimal polynomial will be found.

 In Sect.~\ref{sec:expand-search-space} we give an alternative method which can decide
 for a given $d_y$ whether all solutions are algebraic with a minimal polynomial of degree
 at most~$d_y$, regardless of the degree $d_x$ of the polynomial coefficients of the
 minimal polynomial. This method has the advantage that $d_x$ need not be guessed in
 advance, but it still requires a guess for~$d_y$.
 We are thus led to the question how we can detect with a reasonable amount of
 computation time that a differential equation has at least one transcendental solution.
 There are indeed several things that are worth trying.
 For example, if a differential equation has a logarithmic or an exponential singularity,
 it cannot only have algebraic solutions.
 This test was applied for example in order to prove transcendence of the generating function
 for Kreweras walks with interacting boundaries~\cite{bostan21}.
 Another popular test is to determine the asymptotic behaviour of the series coefficients
 of a solution of the differential equation.
 If it is not of the form $\phi^n n^\alpha$ with $\alpha\in\set Q\setminus\{-1,-2,-3,\dots\}$,
 this also proves the presence of a transcendental solution~\cite{flajolet09}.
 A third possibility is to use arbitrary precision arithmetic~\cite{mezzarobba10a,kauers19c} to compute
 eigenvalues of monodromy matrices for the differential equation.
 If there is an eigenvalue that is not a root of unity, there must be a transcendental solution.
 A fourth idea is to exploit that an algebraic power series $f\in\set Q[[x]]$ must be \emph{globally bounded,} i.e.,
 there must be nonzero integers $\alpha,\beta$ such that $\alpha f(\beta x)\in\set Z[[x]]$. If a given
 differential operator has a series solution that is not globally bounded, then it cannot only
 have algebraic solutions.  
 As a fifth approach, we can investigate the $p$-curvature of the differential equation~\cite{bostan14a,bostan15a}
 and resort to a conjecture of Grothendieck according to which the $p$-curvature is zero for
 almost all primes~$p$ if and only if the differential equation has only algebraic solutions.
 A nice account on this approach was recently given by Bostan, Caruso, and Roques~\cite{bostan23}.
 Another idea is to try to prove transcendence via the criterion of Harris and Sibuya~\cite{harris85},
 which says that for a D-finite function~$f$, the reciprocal $1/f$ is D-finite as well if and only
 if the logarithmic derivative $f'/f$ is algebraic.
 Finally, there are powerful criteria for certain special differential equations, e.g., the
 criterion of Beukers and Heckman for testing algebraicity of a hypergeometric differential equation~\cite{beukers89}.

 All these tests have limitations. The first four tests only provide a sufficient condition
 for the existence of transcendental solutions, but there are equations with transcendental
 solutions on which all three tests fail. In addition, for the fourth test, even if we find
 a solution that looks like it is not globally bounded, it can be difficult to prove that it
 really is not. A limitation of the $p$-curvature test is the
 quantifier ``almost all'': if we encounter a prime (or several primes) for which the
 $p$-curvature is nonzero, this is strong evidence in favor of a transcendental solution,
 but there remains a small chance that the prime(s) were just unlucky.
 The criterion of Harris and Sibuya reduces the problem of proving that $f'/f$ is transcendental
 to the problem of proving that $1/f$ is not D-finite, which is typically more difficult.
 In fact, this criterion is more valuable in the other direction: to prove that $1/f$ is
 not D-finite, it suffices to prove that $f'/f$ is not algebraic.
 The obvious limitation of the criterion of Beukers and Heckman is that it only applies
 to hypergeometric functions. 
 
 In view of this situation, additional sufficient conditions for transcendental solutions
 that can be tested with reasonable computational cost are of interest. Ideally, such
 tests should also provide some artifacts that can serve as witness for the existence of
 transcendental solutions. We propose the term \emph{transcendence certificate} for such
 artifacts. For example, a logarithmic or exponential singularity can be viewed as such
 a transcendence certificate. Observe that the algorithms such as Singer's mentioned
 earlier do not provide any transcendence certificates but will just report ``no algebraic
 solution'' as output.

 The purpose of this paper is to introduce a transcendence certificate based on
 the following classical fact about algebraic functions:
 \begin{prop} \cite{vanDerWaerden31,bliss33}
   \label{prop:alghaspole}
   Every non-constant algebraic function must have at least one pole.
 \end{prop}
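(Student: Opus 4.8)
The plan is to argue by contradiction: assume $y$ is a non-constant algebraic function having no pole anywhere on the Riemann sphere $\mathbb{P}^1(\mathbb{C})$ --- in particular none over the point $x=\infty$ --- and deduce that $y$ is constant, a contradiction. Start from the minimal polynomial of $y$ over $\mathbb{C}(x)$, which is monic with coefficients $a_i\in\mathbb{C}(x)$,
\[
  y^d + a_{d-1}(x)\,y^{d-1} + \cdots + a_0(x) = 0 .
\]
Its roots are the $d$ branches $y_1,\dots,y_d$ of $y$, and its coefficients are, up to sign, the elementary symmetric functions $a_{d-k}=(-1)^k e_k(y_1,\dots,y_d)$. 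The crucial feature is that, although an individual branch $y_j$ need not be single-valued, each $e_k$ \emph{is} a single-valued rational function of~$x$.

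First I would fix the meaning of the hypothesis: ``$y$ has no pole'' should say that every branch stays bounded as $x$ approaches any point of $\mathbb{P}^1(\mathbb{C})$, including the branch points and $\infty$. From this I would conclude that each $e_k(y_1,\dots,y_d)$ is likewise bounded at every such point, since a symmetric expression in finitely many locally bounded quantities is bounded. Hence each coefficient $a_i$ is a rational function with no pole anywhere on $\mathbb{P}^1(\mathbb{C})$. An elementary inspection of rational functions then finishes the argument: no finite pole forces $a_i$ to be a polynomial, and no pole at $\infty$ forces that polynomial to have degree at most~$0$; thus every $a_i$ is a constant. But then $y$ satisfies a polynomial equation with constant coefficients, so $y$ is itself constant, contradicting our assumption.

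The main obstacle is the bookkeeping at the exceptional points --- the branch points, where the $y_j$ cease to be single-valued, and the point at $\infty$, which must genuinely be admitted as a possible pole (for instance $\sqrt{x}$ has no finite pole yet blows up at $\infty$). The cleanest way to dispose of both at once is to trade the local estimates for a global object: pass to the smooth projective model, i.e.\ the compact Riemann surface $X$ attached to the curve $\{u_0(x)+u_1(x)y+\cdots+u_d(x)y^d=0\}$, on which $y$ becomes a genuinely single-valued meromorphic function. The hypothesis then reads that $y$ is holomorphic on the compact surface~$X$, whence $y$ is constant by the maximum modulus principle. This reformulation both removes the special-point bookkeeping and makes transparent why $\infty$ cannot be omitted from the statement.
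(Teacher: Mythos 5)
The paper does not prove this statement at all: it is quoted as a classical fact with citations to van der Waerden and Bliss, and the only in-text gloss is the remark in Section~2 that the completely integral elements of $E=K[y]/\<m>$ are exactly the elements of~$C$. So there is nothing to compare line by line; your proposal has to stand on its own, and it does. Your first argument (trace the hypothesis through the elementary symmetric functions of the conjugate branches, conclude that the coefficients of the monic minimal polynomial are rational functions without poles on $\mathbb{P}^1$, hence constants, hence $y$ is algebraic over the algebraically closed constant field and therefore constant) is exactly the standard proof and is correct. One small adjustment would make it match the paper's setting: the paper works over an arbitrary algebraically closed field $C$ of characteristic zero, not $\mathbb{C}$, so ``bounded near $\xi$'' should be replaced by the valuation-theoretic statement that every Puiseux expansion at $\xi$ has nonnegative starting exponent (the paper's definition of integrality); the elementary symmetric functions of integral Puiseux series are then integral, and the rest of your argument goes through verbatim. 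Your second argument via the compact Riemann surface and the maximum modulus principle is also valid but genuinely analytic, so it only covers $C=\mathbb{C}$; the symmetric-function version is the one that proves the proposition in the generality the paper actually uses.
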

 With our new test, we are able to prove the existence of transcendental solutions
 for some equations that have no logarithmic singularities, no series solutions with illegal
 coefficient asymptotics, and whose monodromy matrices have just roots of unity as eigenvalues.
 We also wish to point out that our approach is applicable to differential equations of
 any order.
 
\section{Preliminaries}\label{sec:prelim}

Throughout this paper, let $C$ be an algebraically closed field of
characteristic zero, and let $K = C(x)$ denote the field of rational functions
over~$C$. A Puiseux series at~$\xi\in C$ is a series of the form
$c_n(x-\xi)^{n/q} + c_{n+1}(x-\xi)^{(n+1)/q}+\cdots$ with $n\in\set Z$,
$q\in\set N$, and $c_n,c_{n+1},\ldots\in C$; we write $C((\ (x-\xi)^{1/q}\ ))$
for the field of all Puiseux series at~$\xi$ whose exponents have a common
denominator dividing $q\in\set N$. Similarly, a Puiseux series at~$\infty$ is
a series of the form $c_nx^{-n/q} + c_{n+1}x^{-(n+1)/q}+\cdots =
c_n(x^{-1})^{n/q} + c_{n+1}(x^{-1})^{(n+1)/q}+\cdots$; the field of all
Puiseux series at~$\infty$ is denoted by $C((x^{-1/q}))$.  In both cases, we
call $n/q$ the \emph{starting exponent} of the series, provided that
$c_n\neq0$.

An algebraic function field $E=K[y]/\<m>$ is a field extension of the rational
function field~$K$ of finite degree, where $m$ is an irreducible polynomial
in~$K[y]$. For every $\xi\in C\cup\{\infty\}$, the element~$y\in E$ can be
identified with any of the $\deg_y(m)$ many roots of the minimal
polynomial~$m$ in the field of Puiseux series at~$\xi$; we call them the
expansions of $y$ at~$\xi$.

A Puiseux series is said to be \emph{integral} if its starting exponent is
nonnegative, i.e., if the corresponding function does not have a pole at the
expansion point.  The element $y$ of $E$ is called integral at $\xi\in
C\cup\{\infty\}$ if all its Puiseux series expansions at $\xi$ are integral.
In order to extend the definition of integrality to other elements of~$E$,
note that for every expansion $f$ of~$y$ we have a field homomorphism
$h_f\colon E\to C((\ (x-\xi)^{1/q}\ ))$ (or $h_f\colon E\to C((x^{-1/q}))$ if
$\xi=\infty$) which maps $y$ to~$f$. Now $u\in E$ is called integral at $\xi$
if for all expansions~$f$ of~$y$ the series $h_f(u)$ is integral. The element
$u$ is called (globally) \emph{integral} if it is integral at every $\xi\in C$ (but not
necessarily at infinity). The set of all integral elements of~$E$ forms a free
$C[x]$-submodule of~$E$, and a basis of this module is called an
\emph{integral basis} of~$E$. We say that an element of $E$ is \emph{completely integral}
if it is integral at every $\xi\in C\cup\{\infty\}$. According to Proposition~\ref{prop:alghaspole},
the completely integral elements of $E$ are precisely the elements of~$C$.

Let $D$ denote the usual derivation with respect to~$x$, i.e., $D(f)=f'$, which turns
$K=C(x)$ or $E=C(x)[y]/\<m>$ into differential fields. An element $c$ of a
differential field~$F$ is called a constant if $D(c)=0$; these constants
always form a subfield of~$F$.  A linear differential operator is an
expression of the form $L=p_0+p_1D + \cdots + p_rD^r$ with $p_0,\dots,p_r\in
K$. If $p_r\neq0$, we call $\ord(L)=r=\deg_D(L)$ the \emph{order} of the
operator.  The operator~$L$ is called \emph{monic} if $p_r=1$.  The set of all
linear differential operators will be denoted by~$K[D]$; it forms a
non-commutative ring in which the multiplication is governed by the Leibniz
rule $Dx=xD+1$. An operator~$L$ is called \emph{irreducible} if it cannot be
written as $L=L_1\cdot L_2$ with $\ord(L_1)\geq1$ and $\ord(L_2)\geq1$.  Every
differential field~$F$ is a $K[D]$-left-module via the action
\[
  (p_0+p_1D + \cdots + p_rD^r)\cdot y\ := \ p_0y+p_1D(y)+\cdots+p_rD^r(y).
\]
An element~$y$ of a differential field~$F$ is called a \emph{solution} of an
operator $L\in K[D]$ if $L\cdot y=0$. The set of all solutions of $L$ in a
differential field $F$ is denoted by $V(L)$.  It is always a vector space over
the constant field of~$F$ and hence called the \emph{solution space} of~$L$.
If the constant field of $F$ is~$C$, then the dimension of $V(L)$ in $F$ is
bounded by the order of~$L$, but in general it is smaller. We say that $L$ has
\emph{only algebraic solutions} if there is a differential field $E=K[y]/\<m>$
such that the solution space $V(L)$ in $E$ has dimension~$\ord(L)$.  If $L$ is
an irreducible operator then either all its solutions are algebraic or none of
them (except for the zero solution)~\cite[Prop.~2.5]{singer79}.

If $L=p_0+\cdots+p_rD^r\in K[D]$ is an operator of order~$r$, we call $\xi\in C$
a \emph{singularity} of $L$ if it is a pole of one of the rational functions
$p_0/p_r,\dots,p_{r-1}/p_r$. The point $\infty$ is called a singularity if,
after the substitution $x\mapsto x^{-1}$, the origin~$0$ becomes a singularity.
If $\xi\in C\cup\{\infty\}$ is not a singularity of~$L$, then $L$ has $r$
linearly independent Puiseux series solutions at~$\xi$, and they are all integral.

The notion of integrality for differential operators is defined
in a similar way as discussed above for algebraic field extensions $E=K[y]/\<m>$.
Throughout this paper, we consider only operators which have a basis of Puiseux series solutions at every point $\xi \in C \cup \{\infty\}$.
For such an operator $L\in K[D]$, we have the module $K[D]/\<L>$ where $\<L>$ denotes
the left ideal $\{P\cdot L\mid P\in K[D]\}$.
Note that $K[D]/\<L>$ is not a
ring but only a (left) $K[D]$-module. In this module, the equivalence class
$[1]_L$ has the property $L\cdot [1]_L=[L]_L=[0]_L$, so $[1]_L$ can be
considered as a solution of~$L$ in $K[D]/\<L>$, very much like the element
$y\in E$ is a root of~$m$. Similar as for algebraic function fields, we can
associate $[1]_L\in K[D]/\<L>$ with any solution~$f$ of $L$ in a Puiseux
series field $C((\ (x-\xi)^{1/q}\ ))$ or $C((x^{-1/q}))$. The association of
$[1]_L$ with $f$ extends to $K[D]/\<L>$ by mapping an equivalence class
$[P]_L$ to the series $P\cdot f$. The notions of integrality can now be
defined like before:
\begin{itemize}
\item $[P]_L$ is called (locally) \emph{integral} at some point $\xi\in C\cup\{\infty\}$
  if for every Puiseux series solution $f$ of $L$ at $\xi$, the series $P\cdot f$ is integral.
\item $[P]_L$ is called (globally) \emph{integral} if it is locally integral at every point $\xi\in C$
  (but not necessarily at $\infty$).  
\item $[P]_L$ is called \emph{completely integral} if it is locally integral at every point $\xi\in C\cup\{\infty\}$.
\end{itemize}
Note that in the last two items it suffices to consider points~$\xi$ that
are singularities of $L$ or poles of some of the coefficients of~$P$.
For any fixed $L$ and $P$, these are only finitely many.
Also recall that we restrict our attention to operators $L$ which have a basis of Puiseux solutions, so that the quantifier \emph{``for all Puiseux series solutions''} in the definitions above is equivalent to \emph{``for all solutions''}.

The set of all integral elements in $K[D]/\<L>$ forms a free $C[x]$-left-module,
and a basis of this module is called an \emph{integral basis} of $K[D]/\<L>$.
An integral basis $\{w_1,\dots,w_r\}$ is called \emph{normal at infinity} if
there are integers $\tau_1,\dots,\tau_r\in\set Z$ such that
$\{x^{\tau_1}w_1,\dots,x^{\tau_r}w_r\}$ is a basis of the
$C(x)_\infty$-left-module of all elements of $K[D]/\<L>$ which are integral at
infinity. Here, $C(x)_\infty$ refers to the ring of all rational functions
$u/v$ with $\deg u\leq\deg v$. Integral bases which are normal at infinity
always exist, and they can be computed~\cite{kauers15b,chen17a}.

Finally, we recall some fundamental facts about operators.
The \emph{adjoint} $L^\ast$ of an operator $L\in K[D]$ is defined in such a way
that for any two operators $L,M\in K[D]$ we have $(L+M)^\ast=L^\ast+M^\ast$ and
$(LM)^\ast=M^\ast L^\ast$. We have $D^\ast=-D$ and $q^\ast=q$ for all $q\in K$.
Moreover, $\ord(L^\ast)=\ord(L)$ for every $L\in K[D]$. 
The \emph{least common left multiple} of two operators $L,M\in K[D]$,
denoted by $\lclm(L,M)$, is defined as the unique monic operator of lowest order
which has both $L$ and $M$ as right factor. Its key feature is that whenever
$f$ is a solution of $L$ and $g$ is a solution of~$M$, then $f+g$ is a solution
of $\lclm(L,M)$.
For the efficient computation of the least common left multiple, see~\cite{bostan12b}.
There is a similar construction for multiplication. 
The \emph{symmetric product} $L\otimes M$ of two operators $L,M\in K[D]$ is
defined as the unique monic operator of lowest order such that whenever $f$
is a solution of $L$ and $g$ is a solution of~$M$, then $fg$ is a solution
of $L\otimes M$ (regardless of the differential field to which $f$ and $g$ belong).
As a special case, the $s$th \emph{symmetric power} of an operator $L\in K[D]$
is defined as $L^{\otimes s}=L\otimes\cdots\otimes L$.
For the efficient computation of the symmetric powers, see~\cite{bronstein97a}.

By construction, we have $V(L)+V(M)\subseteq V(\lclm(L,M))$, and in general, the
inclusion is proper. However, if $\dim V(L)=\ord(L)$ and $\dim V(M)=\ord(M)$,
then we have $V(L)+V(M)=V(\lclm(L,M))$, i.e., the least common multiple cannot have
any extraneous solutions. Likewise, if $\dim V(L)=\ord(L)$ and $\dim V(M)=\ord(M)$,
the solution space of the symmetric product $L\otimes M$ is generated by all products
$fg$ with $f\in V(L)$ and $g\in V(M)$. These facts were shown by Singer~\cite{singer79}
in the context of complex functions, and again using more abstract machinery in the
book of van der Put and Singer~\cite{put03}.

\section{Pseudoconstants}

Let $L \in K[D]$ be a linear differential operator. As mentioned before, if $L$ has a logarithmic
or exponential singularity, it follows immediately that $L$ does not only have
algebraic solutions and we may view the singularity as a transcendence certificate.
We continue to exclude this case from consideration, i.e., we continue to assume
that $L$ has no logarithmic or exponential singularity at any point in $C\cup\{\infty\}$.
In other words, we assume that $L$ has a basis of Puiseux series solutions at
every point.

\begin{defi}\label{def:pseudoconstants}
  Let $L \in K[D]$, and let $[P]_{L} \in K[D]/\langle L \rangle$.
  \begin{enumerate}[leftmargin=*]
    \item $[P]_{L}$ is called a \emph{constant} if $D \cdot [P]_{L} = [0]_{L}$;
    \item $[P]_{L}$ is called a \emph{pseudoconstant} if $[P]_{L}$ is completely integral
      but not a constant.
  \end{enumerate}
  We will say for short that ``$L$ has a [pseudo]constant'' if $K[D]/\<L>$ contains a [pseudo]constant. 
\end{defi}

\begin{prop}
  \label{prop:constant}
  Let $L \in K[D]$, and let $[P]_{L} \in K[D]/\langle L \rangle$.
  Let $E$ be an extension of $K$ such that the solution space $V(L)$ of $L$ in $E$ has dimension~$\ord(L)$. 
  \begin{enumerate}[leftmargin=*]
  \item\label{prop:constant:1} $[P]_{L}$ is a constant if and only if $P \cdot f$ is a constant for every $f\in V(L)$.
  \item If $[P]_L$ is a nonzero constant and $\ord(P)<\ord(L)$, then $\ord(P)=\ord(L)-1$. 
  \item The set of all constants forms a $C$-vector space of dimension at most~$\ord(L)$.
  \end{enumerate}
\end{prop}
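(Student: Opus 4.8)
The plan is to work through the standard dictionary between the module $K[D]/\langle L\rangle$ and the solution space $V(L)$ in $E$: the class $[P]_L$ corresponds to the family of functions $P\cdot f$ as $f$ ranges over $V(L)$, and $D\cdot[P]_L=[DP]_L$ corresponds to differentiation, since $(DP)\cdot f = D\cdot(P\cdot f)=(P\cdot f)'$. The one technical ingredient I will need repeatedly is an \emph{annihilator lemma}: because $\dim V(L)=\ord(L)=:r$, any operator $M\in K[D]$ with $M\cdot f=0$ for every $f\in V(L)$ already lies in $\langle L\rangle$. I would prove this by right division $M=QL+R$ with $\ord(R)<r$; then $R$ annihilates all of $V(L)$, and since a basis $f_1,\dots,f_r$ of $V(L)$ has nonvanishing Wronskian, the homogeneous linear system $R\cdot f_i=0$ forces the coefficients of $R$ to vanish, so $R=0$ and $M=QL\in\langle L\rangle$.

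For part~(1), the forward direction follows by writing $DP=QL$ (the meaning of $D\cdot[P]_L=[0]_L$) and computing $(P\cdot f)'=(DP)\cdot f=Q\cdot(L\cdot f)=0$ for every $f\in V(L)$. Conversely, if every $P\cdot f$ is a constant, then $(DP)\cdot f=(P\cdot f)'=0$ for all $f\in V(L)$, so the annihilator lemma gives $DP\in\langle L\rangle$, i.e.\ $D\cdot[P]_L=[0]_L$. For part~(2), I would compute orders. Writing $P=\sum_{i=0}^d p_iD^i$ with $p_d\neq0$ and $d=\ord(P)<r$, the hypothesis $[P]_L\neq[0]_L$ forces $P\neq0$, hence $DP\neq0$; its leading term is $p_dD^{d+1}$, so $\ord(DP)=d+1$. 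Writing $DP=QL$ with $Q\neq0$ gives $d+1=\ord(Q)+r$, and $\ord(Q)\geq0$ forces $d\geq r-1$, which combined with $d\leq r-1$ yields $\ord(P)=r-1=\ord(L)-1$.

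For part~(3), closure under addition and the $C$-action is immediate, since $D\cdot(c[P]_L)=c\,(D\cdot[P]_L)$ for $c\in C$ (using $Dc=cD$ for a constant $c$), so the constants form a $C$-vector space. This space is defined purely inside $K[D]/\langle L\rangle$ and does not depend on $E$, so I may take $E$ to have constant field exactly $C$. Fixing a basis $f_1,\dots,f_r$ of $V(L)$, part~(1) shows that each constant $[P]_L$ is sent by a $C$-linear map to the tuple $(P\cdot f_1,\dots,P\cdot f_r)\in C^r$; this map is injective, because $P\cdot f_i=0$ for all $i$ means $P$ annihilates $V(L)$, whence $[P]_L=[0]_L$ by the annihilator lemma. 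Therefore the space of constants embeds into $C^r$ and has dimension at most $r=\ord(L)$.

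The main obstacle is really just the annihilator lemma, whose proof rests on the nonvanishing of the Wronskian of solutions that are linearly independent over the constants; once this is in hand, the three parts reduce to a direct order count and two applications of the lemma. A secondary point requiring care in part~(3) is ensuring the target of the evaluation map is $C^r$ rather than $k^r$ for some larger constant field $k$ of $E$, which is why I would fix at the outset an extension $E$ whose constant field is $C$.
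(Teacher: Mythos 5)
Your proof is correct, but part (3) takes a genuinely different route from the paper. For parts (1) and (2) the two arguments coincide in substance: your ``annihilator lemma'' (right division plus the nonvanishing Wronskian of a basis of $V(L)$) is exactly the fact the paper uses in the form ``$V(L)\subseteq V(DP)$ with matching dimensions forces $DP$ and $L$ to be associates,'' and your order count in (2) is the contrapositive of the paper's. The real divergence is in (3). You bound the dimension by evaluating a constant $[P]_L$ at a basis $f_1,\dots,f_r$ of $V(L)$, obtaining an injective $C$-linear map into $C^r$; for the target to be $C^r$ rather than $k^r$ for a larger constant field $k$ of $E$, you must replace the given $E$ by one whose constants are exactly $C$. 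You correctly flag this, but note that it is not automatic from the proposition's hypothesis: it rests on the existence of a Picard--Vessiot (no-new-constants) extension, a standard but nontrivial theorem in this setting (cf.\ van der Put--Singer), so you should cite it explicitly. The paper sidesteps this entirely by working with the adjoint: from $DP=qL$ it sends the constant $[P]_L$ to $q\in K$, shows $L^\ast\cdot q=0$ by applying both sides of $-P^\ast D=L^\ast q$ to $1$, and thus embeds the constants injectively into the space of \emph{rational} solutions of $L^\ast$, whose $C$-dimension is at most $\ord(L^\ast)=\ord(L)$. That argument never leaves $K$ and needs no hypothesis on the constants of $E$; your argument is more concrete but buys that concreteness at the cost of the extra Picard--Vessiot input. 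Both are valid.
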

\begin{proof}
  \quad 
  \begin{enumerate}[leftmargin=*]
  \item 
    Clearly, if $[P]_{L}$ is a constant, then for all $f \in V(L)$, $D\cdot (P
    \cdot f) = (D \cdot [P]_{L}) \cdot f = 0$. Conversely, let $r$ be the order
    of $L$ and $P$ be the representative of order at most $r-1$ of $[P]_{L}$.
    Assume that $P \cdot f$ is a constant for all $f \in V(L)$, i.e., $D \cdot
    (P \cdot f) = 0$. This means that $V(L) \subset V(D\cdot P)$. Since $V(D
    \cdot P)$ has dimension at most~$r$ and $V(L)$ has dimension~$r$, it follows
    that $V(L) = V(D \cdot P)$. This implies that $L$ and $D\cdot P$ are equal
    up to an invertible factor in $K$, and therefore that $D \cdot [P]_{L} =
    [D\cdot P]_{L} = [0]_{L}$.
  \item If $\ord(P)<\ord(L)-1$, then $\ord(DP)<\ord(L)$, so the assumption
    $D\cdot[P]_L=[DP]_L=0$ forces $DP=0$, which in turn forces $P=0$ in
    contradiction to the assumption that $[P]_L$ is not zero.
  \item It is clear that the constants form a $C$-vector space. In order to
    prove the bound on the dimension, consider a $P\in K[D]$ with $\ord(P)<\ord(L)$
    such that $[P]_L$ is a constant. 
    Then $D\cdot[P]_L=[DP]_L=0$, so there is a $q\in K$ with $DP=qL$. It is
    clear that $q$ is uniquely determined and that the function which maps every
    constant $[P]_L$ to the corresponding $q$ is $C$-linear and injective. Now
    $DP=qL$ implies $(DP)^\ast=(qL)^\ast$, so $P^\ast D^\ast=L^\ast q^\ast$, so
    $-P^\ast D=L^\ast q$. Since $1$ is a solution of the left hand side, it must
    be a solution of the right hand side, so $0=(L^\ast q)\cdot1=L^\ast\cdot q$,
    so $q\in V(L^\ast)$. We have thus constructed an injective $C$-linear map
    from the space of all constants to the solution space of $L^\ast$ in $K$.
    Since the dimension of the latter is at most $\ord(L)$, the claim follows.
    \qedhere
  \end{enumerate}
\end{proof}

If $[P]_{L}$ is a constant, then it is completely integral, but unlike in the case of algebraic
functions, the converse is not true in general. This means that pseudoconstants may exist.

\begin{ex}
  Let $L = 3x(x^{2}-1)D^{2} + 2(3x^{2}-1)D$.
  All its solutions are integral at every place including infinity,
  therefore $[1]_{L}$ is completely integral. 
  However, $D\cdot [1]_{L} = [D]_{L} \neq [0]_{L}$, so it is not a constant.
  Alternatively, one can observe that $L$ has a non-constant solution, and therefore $[1]_{L}$ cannot be a constant.
  So $[1]_{L}$ is a pseudoconstant. 
\end{ex}

In view of Prop.~\ref{prop:alghaspole}, we can regard pseudoconstants as transcendence certificates.

\begin{thm}
  Let $L \in K[D]$ be such that there exists a pseudoconstant $[P]_{L} \in K[D]/\langle L \rangle$.
  Then $L$ admits at least one transcendental solution.
\end{thm}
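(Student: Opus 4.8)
The plan is to argue by contraposition: I will assume that $L$ has \emph{only} algebraic solutions and deduce that every completely integral element of $K[D]/\<L>$ is in fact a constant, so that no pseudoconstant can exist. By the definition recalled in Section~\ref{sec:prelim}, the hypothesis gives an algebraic function field $E = K[y]/\<m>$ in which the solution space $V(L)$ has full dimension $\ord(L)$. The crucial structural remark is that $E$ is a differential field, closed under $D$ and under multiplication by elements of $K$; hence for every solution $f \in V(L)$ the element $P \cdot f$ again lies in $E$, and is therefore itself an algebraic function.

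Next I would reconcile the operator-theoretic notion of integrality with integrality inside $E$. The identification of $[1]_L$ with a Puiseux solution $f$ extends, by the $K[D]$-action, to sending $[P]_L$ to $P \cdot f$, and the Puiseux expansions of the algebraic function $P \cdot f \in E$ at a point $\xi$ are exactly the series obtained by applying $P$ to the expansions of $f$ at $\xi$. Consequently, the statement that $[P]_L$ is completely integral in $K[D]/\<L>$ is equivalent to the statement that $P \cdot f$ is completely integral as an element of $E$, simultaneously for every $f \in V(L)$.

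With this dictionary in hand, I would invoke Proposition~\ref{prop:alghaspole} in the form recorded in Section~\ref{sec:prelim}: the completely integral elements of an algebraic function field $E$ are precisely the elements of~$C$. Thus $P \cdot f \in C$, i.e.\ $P \cdot f$ is a constant, for every $f \in V(L)$. By the first item of Proposition~\ref{prop:constant}, this forces $[P]_L$ to be a constant. So a completely integral $[P]_L$ is necessarily a constant, contradicting the assumption that $[P]_L$ is a pseudoconstant (completely integral but \emph{not} a constant). Therefore $L$ cannot have only algebraic solutions, and it admits at least one transcendental solution.

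The step I expect to be the main obstacle is the second one, namely matching the integrality of $[P]_L$ against the integrality of the concrete function $P \cdot f \in E$. This requires checking that the expansion homomorphism $h_f\colon E \to C((\ (x-\xi)^{1/q}\ ))$ used to define integrality in $E$ is compatible with the $K[D]$-action that sends $[P]_L$ to $P \cdot f$, and that ranging over all $f \in V(L)$ indeed accounts for all Puiseux expansions entering the definition of completely integral. Once this compatibility is established, the conclusion is an immediate consequence of the two cited propositions.
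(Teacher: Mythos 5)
Your proposal is correct and follows essentially the same route as the paper's proof: assume all solutions are algebraic, note that $P\cdot f$ is then an algebraic function without poles anywhere on $C\cup\{\infty\}$, conclude by Proposition~\ref{prop:alghaspole} that it is constant, and then use Proposition~\ref{prop:constant} to contradict the assumption that $[P]_L$ is not a constant. The only difference is that you make explicit the compatibility between integrality of $[P]_L$ in $K[D]/\langle L\rangle$ and integrality of $P\cdot f$ in $E$, a point the paper treats as immediate; your check is sound, since the expansions of a basis of $V(L)$ span the full $\ord(L)$-dimensional space of Puiseux series solutions at each point.
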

\begin{proof}
  For a contradiction, assume that $L$ has only algebraic solutions.
  Let $E$ be an algebraic extension of $K$ such that the solution space $V(L)$ in $E$ has dimension $\ord(L)$.
  Since algebraic functions are closed under application of linear operators, $P\cdot f$ is algebraic for all $f \in V(L)$.
  Since $[P]_{L}$ is completely integral, $P \cdot f$ does not have a pole at any $\xi\in C\cup\{\infty\}$.
  By Prop.~\ref{prop:alghaspole}, this implies that $P \cdot f$ is constant.
  Therefore, by Prop.~\ref{prop:constant}, $[P]_{L}$ is a constant, which is a contradiction.
\end{proof}
 
\begin{ex}
  \label{ex:product_2F1}
  \allowdisplaybreaks
  Consider the operator
  \begin{equation}
    \label{eq:2}
    \textstyle
    L = \bigl(x^{2} - x\bigr) D^{2} + \bigl(\frac{31}{24} x - \frac{5}{6}\bigr) D + \frac{1}{48},
  \end{equation}
  annihilating the function
  $x^{1/6}(x-1)^{13/24}{}_{2}F_{1}\bigl(\frac{7}{8},\frac{5}{6}; \frac{7}{6}; x\bigr)$.
  The operator is irreducible, and therefore all its solutions have the same nature.
  By Schwarz' classification and closure properties, they must be transcendental, but let us ignore this argument
  for the sake of the example.
  
  The singularities of the operator are $0$, $1$ and $\infty$, and a basis of solutions at each singularity is given by
  \def\xinf{1/x}
  \def\expfrac#1#2{#1/#2}
  \begin{align}
    \label{eq:7}
    y_{0,1} &= \textstyle x^{\expfrac{1}{6}} \Big( 1 + \frac{1}{12}x + \O(x^{2})\Big) \\
      y_{0,2} & = \textstyle 1 + \frac{1}{40}x + \O(x^{2}) \\
    y_{1,1} &= \textstyle (x{-}1)^{\expfrac{13}{24}} \Big( 1 - \frac{34}{111}(x{-}1) + \O((x{-}1)^{2})\Big) \\
      y_{1,2} & = \textstyle 1 - \frac{1}{22}(x{-}1) + \O((x{-}1)^{2}) \\
    y_{\infty,1} &= \textstyle (\xinf)^{\expfrac{1}{6}} \Big( 1 + \frac{4}{75}\left( \xinf \right) + \O(\left(\xinf\right)^{2})\Big) \\
      y_{\infty,2} & = \textstyle (\xinf)^{\expfrac{7}{8}} \Big( 1 + \frac{7}{184}\left( \xinf \right) + \O(\left( \xinf\right)^{2})  \Big)
  \end{align}
  Therefore, $[1]_L$ is a pseudoconstant, and thus the operator $L$ has no nonzero algebraic solution.

  As noted in the introduction, we could also compute the monodromy matrices of $L$ around $0$, $1$ and $\infty$.
  If one of them was not a root of unity, this would give another proof of transcendence.
  However, numeric computations suggest that all eigenvalues are roots of unity in this example.
  More precisely, the monodromy group around $0$ is generated by two matrices $M_1$ and $M_2$ with
  \begin{equation}
    \label{eq:11}
    M_1^{3} =
    \begin{pmatrix}
      1 & 0 \\ 0 & 1
    \end{pmatrix}
    \pm 10^{-17}
    \begin{pmatrix}
      0 & 0 \\
      0 & 7.38 \pm 6.75\i
    \end{pmatrix}
  \end{equation}
  and
  \begin{equation}
    \label{eq:12}
    M_2^{24} =
    \begin{pmatrix}
      1 & 0 \\ 0 & 1
    \end{pmatrix}
    \pm 10^{-13}
    \begin{pmatrix}
      1.45 \pm 1.42\i & 3.44 \pm 3.42\i \\
      0.758 \pm 0.757\i & 1.96 \pm 1.96\i
    \end{pmatrix}
  \end{equation}
  At $1$, the monodromy group is generated by two 6th roots of unity, and at $\infty$, by two 24th roots of unity.
\end{ex}

\begin{ex}
  \label{ex:pseudoconstant-ord3}
  Consider the operator
  \begin{align*}
    L ={}& (x - 1)^3 x^3  (x + 1)^3 D^{3} \\
    & \textstyle + \frac{19}{5}  (x - 1)^{2} x^{2}  (x+1)^{2} \bigl(x^{2} + \frac{22069}{9576} x - \frac{195}{152}\bigr) D^{2}\\
    & \textstyle -\frac{99}{80}  (x - 1) x (x + 1)  \\
    & \hphantom{{}+{}}\textstyle\bigl(x^{4} - \frac{117001919}{37422} x^{3} - \frac{105923}{5346} x^{2} + \frac{16795789}{5346} x + \frac{205}{66}\bigr) D
      \\
    & \textstyle -\frac{9}{20} x^{6} + \frac{517319279}{68040} x^{5} + \frac{256382531}{27216} x^{4} \\
    & \hphantom{{}+{}}\textstyle - \frac{19723513}{4320} x^{3} - \frac{2560752251}{272160} x^{2} - \frac{828238469}{272160} x - \frac{3}{32}.
  \end{align*}
  This operator has the singularities $0,1,-1,\infty$, with respective initial exponents
  \begin{equation}
    \label{eq:4}
    \begin{array}{rccc}
      (0) & -\frac{1}{8} & -\frac{3}{4} & -1 \\[1ex]
      (1) & \frac{5}{7} & \frac{4}{9} & -2 \\[1ex]
      (-1) & \frac{5171}{630} & \frac{3}{8} & -\frac{2}{3} \\[1ex]
      (\infty) & \frac{4}{5} & \frac{3}{4} & -\frac{3}{4} 
    \end{array}
  \end{equation}
  The operator is irreducible, and therefore all its solutions have the same nature.
  $L$ has the pseudoconstant $[P]_L$, with
\begin{align}
    \label{eq:6}
    P = {} &
      (x +1)^{-6} x^{3}  (x - 1)^{2} D^{2} \\
    & \textstyle +  (x + 1)^{-7} x^{2} (x - 1) \alpha(x) D \\
    & + (x + 1)^{-8} x \beta(x),
  \end{align}
where $\alpha(x)$ and $\beta(x)$ are certain polynomials of degree 3 and~6 respectively,
with coefficients in $\QQ$. So all the solutions of $L$ are transcendental.
\end{ex}

For operators with at most $3$ singularities, the nature of the solutions and the existence of pseudoconstants are determined by the initial exponents of the solutions.
Indeed, the operator is then uniquely determined up to a scalar factor by its singularities and initial exponents. Changing the position of the singularities is equivalent to applying a rational change of variables by a M\"obius transform, which preserves the nature of the solutions and the pseudoconstants.

This property does not hold for operators with more singularities, as the next example shows.

\begin{ex}
  Consider the operator
  \begin{align}
    \label{eq:10}
    L = {} & (x - 2)^{3}  (x - 1)^{3}  x^{3} D^{3} \\
    &\textstyle {} + \frac{19}{5}  (x - 2)^{2}  (x - 1)^{2}  x^{2}   \bigl(x^{2} - \frac{16547}{9576} x + \frac{2420}{1197}\bigr) D^{2} \\
    & \textstyle {}+ \frac{99}{80}  (x - 2)  (x - 1)  x    \\
    &\textstyle \hphantom{{}+{}} \bigl(x^{4} + \frac{8816399}{112266} x^{3} - \frac{8566381}{37422} x^{2} + \frac{7980386}{56133} x - \frac{3200}{6237}\bigr) D \\
    & \textstyle {} -\frac{9}{20} x^{6} + \frac{5640547}{68040} x^{5} - \frac{20050393}{136080} x^{4} \\
    & \textstyle {}- \frac{2904319}{30240} x^{3} + \frac{5167531}{54432} x^{2} + \frac{1144387}{19440} x + \frac{320}{63}.
  \end{align}
  It has the singularities $0,1,2,\infty$, with respective initial exponents:
    \begin{equation}
    \label{eq:4b}
    \begin{array}{rccc}
      (0) & \frac{5}{7} & \frac{4}{9} & -2 \\[1ex]
      (1) & \frac{5171}{630} & \frac{3}{8} & -\frac{2}{3} \\[1ex]
      (2) & -\frac{1}{8} & -\frac{3}{4} & -1 \\[1ex]
      (\infty) & \frac{4}{5} & \frac{3}{4} & -\frac{3}{4} 
    \end{array}
  \end{equation}
  The initial exponents are the same as those in Example~\ref{ex:pseudoconstant-ord3}, but the position of the singularities differ.
  Unlike the operator in Example~\ref{ex:pseudoconstant-ord3}, the operator $L$ does not admit a pseudoconstant.
  Note that using the technique described in~\cite{SingerUlmer93}, it can be proven that the operator $L$ does nonetheless admit only transcendental solutions.
\end{ex}

\begin{ex}
  In order to illustrate that this proof technique works for operators of any order, we provide%
  \footnote{\url{https://github.com/mkauers/ore_algebra/blob/master/src/ore_algebra/examples/pseudoconstants.py}}
   an operator of order 6 as well as a transcendence certificate.
  The operator has singularities at $0, 1, \dots, 6$ as well as $\infty$, with the following exponents:
  \begin{equation}
    \begin{array}{rcccccc}
      (0, \dots, 6) & -\frac{2}{7} & 0 & \frac{3}{7}& \frac{8}{7}& \frac{13}{7}& \frac{18}{7} \\[1ex]
      (\infty) & -1 & 0 & 1 & 2 & 3 & 4
    \end{array}
  \end{equation}
\end{ex}

There are at least two ways to search for pseudoconstants for a given~$L$.
The first one uses integral bases. It is shown in Lemma~8 of \cite{chen17a} that
a basis of the $C$-vector space of all completely integral elements of $K[D]/\<L>$
is given by $\{\,x^jw_i : i=1,\dots,r; j=0,\dots,\tau_i\,\}$ whenever
$\{w_1,\dots,w_r\}$ is an integral basis that is normal at infinity and
$\tau_1,\dots,\tau_r\in\set Z$ are such that $\{x^{\tau_1}w_1,\dots,x^{\tau_r}w_r\}$
is a local integral basis at infinity. This motivates the following algorithm.

\begin{algo}
  \label{algo:pseudoconstants_integral_basis}
  Input: $L \in K[D]$
  
  Output: a pseudoconstant of $L$ if there is one, otherwise $\bot$.

  \step 10 Compute an integral basis $w_{1},\dots,w_{r}$ of $K[D]/\langle L\rangle$ which
    is normal at~$\infty$, and the corresponding $\tau_{1},\dots,\tau_{r} \in \ZZ$
  \step 20 If there are $i\in\{1,\dots,r\}$ and $j\in\{0,\dots,\tau_i\}$ with
    $[Dx^jw_i]_L\neq0$, return such an $x^jw_i$
  \step 30 Otherwise, return~$\bot$
\end{algo}

\begin{thm}
  Algorithm~\ref{algo:pseudoconstants_integral_basis} is correct. 
\end{thm}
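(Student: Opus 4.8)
The plan is to reduce the correctness statement to the linear algebra of the derivation $D$ acting on a known finite $C$-basis. First I would recall from Lemma~8 of~\cite{chen17a} that, with $w_1,\dots,w_r$ and $\tau_1,\dots,\tau_r$ as computed in Step~1, the set $B=\{x^jw_i : i=1,\dots,r;\ j=0,\dots,\tau_i\}$ is a $C$-basis of the space $\mathcal{I}$ of all completely integral elements of $K[D]/\<L>$. In particular, each individual element $x^jw_i$ of $B$ belongs to $\mathcal{I}$ and is therefore completely integral.

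Next I would locate the constants inside $\mathcal{I}$. By Definition~\ref{def:pseudoconstants}, an element $[P]_L$ is a constant precisely when $D\cdot[P]_L=[0]_L$, and, as observed right after Proposition~\ref{prop:constant}, every constant is itself completely integral and hence lies in $\mathcal{I}$. Thus the map $\delta\colon\mathcal{I}\to K[D]/\<L>$ given by $\delta([P]_L)=[DP]_L$ is $C$-linear, and its kernel is exactly the $C$-subspace of constants contained in $\mathcal{I}$. By the definition of a pseudoconstant, a pseudoconstant exists if and only if $\mathcal{I}$ contains a non-constant element, i.e., if and only if $\delta$ does not vanish identically on $\mathcal{I}$.

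The decisive step is to combine the $C$-linearity of $\delta$ with the fact that $B$ spans $\mathcal{I}$: a $C$-linear map vanishes on a space if and only if it vanishes on a spanning set, so $\delta\equiv 0$ on $\mathcal{I}$ if and only if $[Dx^jw_i]_L=[0]_L$ for all admissible pairs $(i,j)$. This is exactly the condition tested in Steps~2 and~3. Consequently, if Step~2 finds a pair $(i,j)$ with $[Dx^jw_i]_L\neq[0]_L$, the returned element $x^jw_i$ is completely integral (being a member of $B\subseteq\mathcal{I}$) and not a constant (since $\delta$ does not annihilate it), so it is a genuine pseudoconstant; and if no such pair exists, then $\delta\equiv0$ on $\mathcal{I}$, every completely integral element is a constant, no pseudoconstant exists, and returning $\bot$ is correct.

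I do not expect a serious technical obstacle here: the argument rests entirely on pairing the explicit basis from~\cite{chen17a} with the elementary principle that a linear map is zero iff it kills a spanning set, and the basis $B$ is finite so the test in Step~2 is effective. The closest thing to a subtlety, and the one point I would check carefully, is that every constant genuinely belongs to $\mathcal{I}$; this is what guarantees that $\ker\delta$ is the \emph{full} space of constants rather than only those constants that happen to be completely integral, and hence that the absence of pseudoconstants truly follows when $\delta$ vanishes on all of $B$. This inclusion, together with the fact that membership of $x^jw_i$ in $B$ already certifies complete integrality of that single element, is immediate from the material recalled above.
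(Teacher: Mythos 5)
Your proposal is correct and follows essentially the same route as the paper: both rest on the fact that the $x^jw_i$ form a $C$-basis of the completely integral elements (Lemma~8 of~\cite{chen17a}) and that the constants form a $C$-subspace, so a pseudoconstant exists if and only if some basis element fails to be a constant. Your phrasing via the linear map $[P]_L\mapsto[DP]_L$ and its kernel is just a repackaging of the paper's remark that linear combinations of constants are constants; the one subtlety you flag (that every constant is completely integral) is in fact not needed for the argument, since it suffices that every pseudoconstant is completely integral, which holds by definition.
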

\begin{proof}
  It is clear that the algorithm is correct if it does not return~$\bot$.
  It remains to show that $L$ has no pseudoconstant if the algorithm does return~$\bot$.
  In view of the remarks before the algorithm, every completely integral element of
  $K[D]/\<L>$, and thus in particular every pseudoconstant, is a $C$-linear combination
  of the~$x^jw_i$. But if all the $x^jw_i$ were constants, then, since the constants
  also form a $C$-vector space, so would be all their linear combinations. Therefore,
  if there are pseudoconstants at all, there must be one among the $x^jw_i$.
\end{proof}

\def\myceil#1{\lceil -#1 \rceil}%
An implementation of Algorithm~\ref{algo:pseudoconstants_integral_basis} is available in the latest version of the SageMath package \texttt{ore\_algebra}\footnote{\url{https://github.com/mkauers/ore_algebra}}.
Otherwise, in an environment where no functionality for computing integral bases is available, we can use linear
algebra to search for pseudoconstants by brute force. This has the advantage of being conceptually
more simple, but the disadvantage that we cannot easily recognize the absence of pseudoconstants.
Let $\xi_{1},\dots,\xi_{m}\in C$ be the singularities of~$L$, and assume that $\infty$ is not a singularity.
At each singularity~$\xi_{i}$, let $\frac{p_{i}}{q} \in \QQ$ be the smallest exponent appearing in
one of the solutions at~$\xi_{i}$.
Let $u = (x-\xi_{1})^{\max(0,\myceil{p_{1}/q})}\cdots (x-\xi_{m})^{\max(0,\myceil{p_{m}/q})}$, so
that $[u]_{L}$ is globally integral.

For each singularity $\xi_{i}$, choose a bound $N_{i} \in \NN$ on the degree of the denominator of a
local integral basis at~$\xi_{i}$, and let $N = N_1+\dots+N_{m}$.

We form the ansatz
\begin{equation}
  q = \frac{u}{(x-\xi_{1})^{N_{1}}\cdots (x-\xi_{m})^{N_{m}}} \sum_{j=0}^{r-1} \sum_{i=0}^{N} c_{i,j} x^{i}D^{j}.
  \label{eq:1}
\end{equation}
with unknowns $c_{i,j}$.
Evaluating it at all solutions at $\xi_{1},\dots,\xi_{m},\infty$ gives series whose coefficients are linear combinations of the unknowns $c_{i,j}$, and setting those coefficients with negative valuations to $0$ yields a system of linear equations to solve.
Each solution is an operator which is completely integral.

However, if no non-zero solution is found, or if all solutions are constants, this is not enough to conclude that the operator does not have a pseudoconstant. It could just mean that the guessed bounds on the denominator were too conservative.



If $L$ does not have a pseudoconstant, we could try to apply some transformation to $L$ that does not change
the nature of the solutions of $L$ but may affect the existence of pseudoconstants.
For example, applying a gauge transform to $L$ does not change the nature of its solutions.
However, gauge transforms do not affect the existence of pseudoconstants either.
Indeed, let $L \in K[D]$ be a linear operator, $M \in K[D]$ be another one and $L'$ be the gauge transform of $L$
such that $V(L') = \{M \cdot f : f \in V(L)\}$. Assume that $[P]_{L'}$ is a pseudoconstant in $K[D]/\<L'>$.
Then $PM \cdot f$ does not have a pole for any $f\in V(L)$, and there exists an $f\in V(L)$ such that $PM\cdot f$
is not a constant. By definition, this implies that $[PM]_{L}$ is a pseudoconstant in $K[D]/\<L>$.
In conclusion, gauge transforms are not strong enough to create pseudoconstants.
We will see next that we may have more success with other operations. 

\section{Symmetric powers}
\label{sec:expand-search-space}

Symmetric powers are useful for proving identities among D-finite functions and they find
applications in algorithms for factoring operators~\cite{put03}.
They can also be used to decide for a given operator $L$ and a given $d\in\set N$
whether all solutions of $L$ are algebraic functions of degree at most~$d$.
For, if $f$ is an algebraic solution of $L$ with a minimal polynomial $m\in K[y]$ of degree~$d$,
then $m$ has $d$ distinct solutions $f_1,\dots,f_d$ in an algebraic closure $\bar K$ of~$K$
and we can write $m=(y-f_1)\cdots(y-f_d)$.
The solutions $f_1,\dots,f_d$ of $m$ are conjugates of~$f$, and since $L$ has coefficients in~$K$,
we have $L\cdot\sigma(f)=\sigma(L\cdot f)=0$ for every automorphism $\sigma$ that fixes~$K$.
Therefore, $f_1,\dots,f_d$ are also solutions of~$L$.
For every~$i$, the $i$th coefficient of $m = (y-f_{1})\cdots (y-f_{d})$ is the $(d-i)$th elementary symmetric
polynomial of $f_1,\dots,f_d$ and therefore an element of $L^{\otimes(d-i)}$.
As the coefficients of $m$ belong to~$K=C(x)$, they must show up among the rational solutions
of $L^{\otimes(d-i)}$. This observation motivates the following algorithm.

\begin{algo}\label{alg:algsols}
  Input: $L\in C(x)[D]$ and $d\in\set N$.

  Output: if all solutions of $L$ are algebraic functions of degree at most $d$, the minimal polynomial of one such
  solution; otherwise~$\bot$.

  \step 10 for $i=1,\dots,d$, compute the symmetric power $L^{\otimes i}$.
  \step 20 for $i=1,\dots,d$, compute basis elements $q_{i,1},\dots,q_{i,N_i}$ of the solution space of $L^{\otimes i}$
  in $C(x)$.
  \step 30 form an ansatz $y^{d} + \sum_{i=1}^{d} \sum_{j=1}^{N_{i}} c_{i,j}q_{i,j}y^{d-i}$ with undetermined
  coefficients $c_{i,j}$
  \step 40 substitute a truncated series solution $f$ of $L$ into the ansatz, equate coefficients, and solve the resulting system for the undetermined coefficients $c_{i,j}$.
  \step 50 if the system has no solution, return $\bot$.
  \step 60 let $m$ be the polynomial corresponding to one of the solutions of the linear system.
  \step 70 if all roots of $m$ are solutions of~$L$, return $m$
  \step 80 otherwise, go back to step~4 and try again with a higher truncation order.
\end{algo}

Compared to the guess-and-prove approach mentioned in the introduction, the algorithm above has the advantage
that only one of the degrees of the minimal polynomials has to be guessed.

Algorithm~\ref{alg:algsols} indicates that symmetric powers know something about algebraicity of solutions.
The next result points in the same direction.
It says that the symmetric powers of an operator $L$ are larger if $L$ has a transcendental solution. 

\begin{thm}
  Let $L \in C(x)[D]$.
  \begin{enumerate}[leftmargin=*]
    \item If $L$ has only algebraic solutions, then $\ord(L^{\otimes s}) = \O(s)$ as $s \to \infty$.
    \item If $L$ has at least one transcendental solution and $D^{2}$ is a right factor of $L$, then $\ord(L^{\otimes s}) = \Omega(s^{2})$ for $s \to \infty$.
  \end{enumerate}
\end{thm}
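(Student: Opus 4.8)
The plan is to translate both statements into estimates on the $C$-dimension of the space spanned by the degree-$s$ products of solutions of $L$. I would first recall that $\ord(L^{\otimes s})$ equals the $C$-dimension of the span of all products $f_1\cdots f_s$ with $f_1,\dots,f_s\in V(L)$: by definition $L^{\otimes s}$ is the minimal monic operator annihilating these products, and the minimal annihilator of an $m$-dimensional space of functions has order exactly $m$ (Wronskian). Consequently, exhibiting $N$ products that are $C$-linearly independent gives $\ord(L^{\otimes s})\geq N$, while confining the whole product span to a $C$-space of dimension $N$ gives $\ord(L^{\otimes s})\leq N$. The two parts thus become an upper and a lower bound on this single dimension.

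For part~(1) I would use that algebraicity forces all solutions of $L$ into a fixed algebraic function field $E=K[y]/\<m>$ of finite degree over $K$. Fixing a $C$-basis $f_1,\dots,f_r$ of $V(L)$, each $f_i$ has poles only at finitely many places; let $S$ be the union of these pole sets and $M$ a bound on the pole orders, so $v(f_i)\geq -M$ for every place $v$ and $v(f_i)\geq 0$ for $v\notin S$. Then every degree-$s$ product $p=f_{i_1}\cdots f_{i_s}$ satisfies $v(p)=\sum_j v(f_{i_j})\geq -sM$ and has no pole outside $S$, so the entire product span lies in the Riemann--Roch space $\mathcal L(D_s)$ of the divisor $D_s=sM\sum_{v\in S}v$, of degree $sM\,|S|=\O(s)$. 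The elementary bound $\dim_C\mathcal L(D_s)\leq\deg D_s+1$ then yields $\ord(L^{\otimes s})=\dim_C(\text{product span})\leq\deg D_s+1=\O(s)$.

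For part~(2) I would write $L=\tilde L\,D^2$ and note that $D^2\cdot 1=0=D^2\cdot x$, so both $1$ and $x$ lie in $V(L)$; let $h\in V(L)$ be a transcendental solution. For any $a,b\geq 0$ with $a+b\leq s$ the function $x^a h^b$ is a degree-$s$ product of solutions (take $a$ factors $x$, $b$ factors $h$, and $s-a-b$ factors $1$), hence $x^a h^b\in V(L^{\otimes s})$. These are $C$-linearly independent: a relation $\sum_{a+b\leq s}c_{ab}\,x^a h^b=0$ regroups as $\sum_b\bigl(\sum_a c_{ab}x^a\bigr)h^b=0$, and transcendence of $h$ over $K$ makes $1,h,h^2,\dots$ $K$-linearly independent, forcing each $\sum_a c_{ab}x^a=0$ and hence every $c_{ab}=0$. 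Since there are $\binom{s+2}{2}=\tfrac12(s+1)(s+2)$ admissible pairs $(a,b)$, I obtain $\ord(L^{\otimes s})\geq\binom{s+2}{2}=\Omega(s^2)$.

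The valuation bookkeeping in~(1) and the regrouping in~(2) are routine; the conceptual heart of~(1) is that algebraicity confines all products to one finite extension, so their poles grow only linearly and a degree (Riemann--Roch) bound applies, whereas in~(2) the right factor $D^2$ is exactly what supplies the two polynomial solutions $1$ and $x$ that, crossed with the powers of a transcendental $h$, produce a genuinely two-dimensional grid of independent products. I expect the main obstacle to lie in part~(1): one must verify that the product span really does sit inside a single space $\mathcal L(D_s)$ of linearly growing degree, and that $\ord(L^{\otimes s})$ equals the $C$-dimension of that span rather than merely being bounded by it.
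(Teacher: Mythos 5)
Your proof is correct, and part (2) is essentially identical to the paper's: both list the $\binom{s+2}{2}$ products $1^{e_1}x^{e_2}f^{e_3}$ with $e_1+e_2+e_3=s$ and deduce their independence from the transcendence of $f$ over $C(x)$ (you make explicit the regrouping into a polynomial relation for $f$, which the paper leaves implicit). Part (1), however, takes a genuinely different route. The paper argues through the ideal $I$ of algebraic relations among a basis $f_1,\dots,f_r$ of $V(L)$ inside $C(x)[y_1,\dots,y_r]$: algebraicity of each $f_i$ forces $\dim I=0$, hence the contraction of $I$ to $C[y_1,\dots,y_r]$ has dimension at most $1$, and the Hilbert function of the quotient --- which dominates the number of independent degree-$s$ products --- grows like $\O(s)$. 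You instead control pole orders: every $s$-fold product of basis solutions lies in the Riemann--Roch space of the divisor $sM\sum_{v\in S}v$ on the curve attached to $E$, whose degree is $sM|S|$, and the elementary bound $\dim_C\mathcal{L}(D)\le\deg D+1$ gives the linear estimate. Both arguments lean on the same background fact, stated in the paper's preliminaries, that $\ord(L^{\otimes s})$ equals the $C$-dimension of the span of $s$-fold products of solutions taken in a field where $V(L)$ attains full dimension; you are right to single this out as the point needing care. What your route buys is an explicit linear bound $M|S|\,s+1$ in terms of the pole data of a solution basis, and a thematic fit with the proposition that every non-constant algebraic function has a pole, which underlies the whole paper; what the paper's route buys is that it avoids valuation theory on the curve and makes transparent that linear growth of $\ord(L^{\otimes s})$ is precisely the statement that the solutions generate a field of transcendence degree at most one over $C$.
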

\begin{proof}
  Let $r$ be the order of~$L$.
  \begin{enumerate}[leftmargin=*]
   \item Let $f_1,\dots,f_r$ be a basis of~$V(L)$, and let $m_1,\dots,m_r\in C(x)[y]$ be their respective
     minimal polynomials. Furthermore, let $I_{\mathrm{rat}} = \{\,p\in C(x)[y_1,\dots,y_r] : p(f_1,\dots,f_r)=0\,\}$ be
     the ideal of algebraic relations among $f_1,\dots,f_r$.
     Since $m_i(y_i)\in I_{\mathrm{rat}}$, we have $\dim(I_{\mathrm{rat}})=0$.
     Therefore, the ideal $I_{\mathrm{pol}}=I_{\mathrm{rat}}\cap C[x][y_1,\dots,y_r]$ has dimension~1.
     As eliminating a variable cannot increase the dimension, we find that the ideal
     $I_{\mathrm{const}}:=I_{\mathrm{pol}}\cap C[y_1,\dots,y_r]$ has dimension at most~1.
     This means that the dimension of the $C$-vector space
     generated in $C[y_1,\dots,y_r]/I$ by the power products $y_1^{e_1}\cdots y_r^{e_r}$
     with $e_1,\dots,e_r\in\set N$ such that $e_1+\cdots+e_r\leq s$ has dimension $\O(s^1)$, as $s\to\infty$.
     Therefore, the dimension of the $C$-vector space generated by $f_1^{e_1}\cdots f_r^{e_r}$ with $e_1,\dots,e_r\in\set N$
     such that $e_1+\cdots+e_r=s$ has dimension $\O(s^1)$, as $s\to\infty$.
     This space is the solution space of $L^{\otimes s}$, and the order of $L^{\otimes s}$
     matches the dimension of this space.
   \item Since $D^2$ is a right factor of~$L$, we have $1$ and $x$ among the solutions of~$L$. If there
     is also at least one transcendental solution~$f$, then the solution space of $L^{\otimes s}$ contains
     all elements $1^{e_1}x^{e_2}f^{e_3}$ with $e_1,e_2,e_3\in\set N$ such that $e_1+e_2+e_3=s$, and the
     transcendence of $f$ implies that they are all linearly independent over~$C$.
     As these are $\binom{s+2}s=\Omega(s^2)$ many, the claim follows again from
     $\dim_C V(L^{\otimes s})=\ord(L^{\otimes s})$. \qedhere
   \end{enumerate}
\end{proof}

This theorem provides yet another heuristic test for the existence of transcendental solutions:
simply compute $L^{\otimes s}$ for the first
few $s$ and see how their orders grow. As the theorem only makes a statement for asymptotically large~$s$, looking at specific
values of $s$ will not allow us to make any definite conclusion, but it can provide convincing evidence. 

\begin{ex}
  Consider the operators
  \begin{align}
    \label{eq:3}
    L_{1} & =  \bigl(256x^5-3125\bigr)D^{4} + 3200x^{4} D^{3} \\
    &\hphantom{{}={}}+ 9840x^{3}D^{2} + 6120 x^{2}D - 504x \\
    L_{2} &= \textstyle\lclm\Bigl(D^{2},\bigl(x^{2} - x\bigr) D^{2} + \bigl(\frac{31}{24} x - \frac{5}{6}\bigr) D + \frac{1}{48}\Bigr).
  \end{align}
  The operator $L_{1}$ is the annihilator of the roots of $y^{5} + xy + 1$ in $K[y]$, so it only has algebraic solutions.
  The operator $L_{2}$ is the lclm of the operator from Example~\ref{ex:product_2F1} and $D^{2}$, so it has a transcendental solution and it has $D^{2}$ as a right factor.
  The order of the symmetric powers of the operators is growing as follows:
  \begin{center}
    \upshape
    \begin{tabular}[c]{rrrrrrr}
      $s$  & 1 & 2 & 3 & 4 & 5  \\
      \hline
      \rule[-4pt]{0pt}{14pt} $\ord(L_1^{\otimes s})$ & 4 & 9 & 15 & 21 & 27 \\
      \hline
      \rule[-4pt]{0pt}{14pt} $\ord(L_2^{\otimes s})$ & 4 & 10 & 20 & 35 & 56
    \end{tabular}
  \end{center}
  As predicted by the theorem, for $L_1$ the growth is linear, and for $L_2$ the growth is at least quadratic (cubic).
\end{ex}

The assumption on having $D^{2}$ as a right factor in the second part of the theorem cannot be dropped, as can be seen for example with $L=D^{2}-1$, whose solutions are $\exp(x)$ and $\exp(-x)$.
The solution space of $L^{\otimes s}$ is spanned by the terms $\exp(x(i - (s-i)))$ for $i \in \{0,\dots,s\}$, and therefore has dimension $s+1 = \O(s)$.
More generally, for any operator of order $r\leq 2$, the order of $L^{\otimes s}$ is bounded by $\binom{s+r-1}{s} \leq s+1$.
The divisibility condition says that $1$ and $x$ are solutions of~$L$, and in order to have in addition a transcendental solution, the order of $L$ must be at least~3.
If $L$ does not have $D^2$ as a right factor, apply the theorem to $\lclm(L,D^2)$ instead of~$L$.
Note that $L$ has only algebraic solutions if and only if $\lclm(L,D^2)$ has only algebraic solutions. 

More generally, if $M$ is any operator that has only algebraic solutions, then $L$ has only algebraic
solutions if and only if $\lclm(L,M)$ has only algebraic solutions. This is because, as remarked at
the end of Sect.~\ref{sec:prelim}, the least common multiple does not have any extraneous solutions. 
Nevertheless, as we show next, there is no hope that $\lclm(L,M)$ could have any pseudoconstants if
not already $L$ has any.

\begin{lem}
  \label{lem:lclm_pseudoconstants}
  Let $L,M \in K[D]$ and $N = \lclm(L,M)$.
  If $[P]_{N}$ is a nonzero completely integral element (resp. a pseudoconstant) in $K[D]/\langle N\rangle$, then
  at least one of $[P]_{L}$ or $[P]_{M}$ is a non-zero completely integral element (resp. a pseudoconstant) in the
  respective module. 
\end{lem}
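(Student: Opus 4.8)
The plan is to push the completely integral element $[P]_N$ down to the two quotient modules $K[D]/\langle L\rangle$ and $K[D]/\langle M\rangle$, exploiting the fact that $N=\lclm(L,M)$ has both $L$ and $M$ as right factors. First I would record the algebraic input that drives everything: since $N=\lclm(L,M)$, we have $\langle N\rangle=\langle L\rangle\cap\langle M\rangle$. Indeed, $N$ is a left multiple of $L$ and of $M$, so $\langle N\rangle\subseteq\langle L\rangle$ and $\langle N\rangle\subseteq\langle M\rangle$; conversely any operator lying in both $\langle L\rangle$ and $\langle M\rangle$ is a common left multiple of $L$ and $M$, hence by the defining least-order property of the lclm it is a left multiple of $N$, giving the reverse inclusion.

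Next I would establish that complete integrality transfers from $N$ to $L$ and to $M$. Writing $N=AL$, every Puiseux series solution $g$ of $L$ at a point $\xi$ satisfies $N\cdot g=A\cdot(L\cdot g)=0$, so $g$ is also a solution of $N$ at $\xi$. Since $[P]_N$ is completely integral, $P\cdot g$ is integral at $\xi$ for all such $g$; as this holds at every $\xi\in C\cup\{\infty\}$, the element $[P]_L$ is completely integral. The same argument with $N=BM$ shows $[P]_M$ is completely integral. Thus both images are automatically completely integral, and the only thing left to pin down is which of them is nonzero (resp. non-constant).

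For the completely integral version, suppose both $[P]_L$ and $[P]_M$ were zero. Then $P\in\langle L\rangle\cap\langle M\rangle=\langle N\rangle$, forcing $[P]_N=[0]_N$ and contradicting the hypothesis that $[P]_N$ is nonzero. Hence at least one image is nonzero, and by the previous paragraph it is a nonzero completely integral element. For the pseudoconstant version, $[P]_N$ is in addition not a constant, i.e. $DP\notin\langle N\rangle=\langle L\rangle\cap\langle M\rangle$; therefore $DP\notin\langle L\rangle$ or $DP\notin\langle M\rangle$. In the first case $D\cdot[P]_L=[DP]_L\neq[0]_L$, so $[P]_L$ is completely integral but not a constant, i.e. a pseudoconstant (and in particular nonzero); the second case yields the same conclusion for $[P]_M$.

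I expect no serious obstacle here: the argument is essentially bookkeeping around the identity $\langle N\rangle=\langle L\rangle\cap\langle M\rangle$, applied once to $P$ (for the nonzero/completely integral claim) and once to $DP$ (for the non-constant/pseudoconstant claim). The one point deserving care is the transfer of complete integrality, where I must check that the quantifier \emph{``for every Puiseux series solution''} behaves correctly, that is, that the solutions of $L$ and of $M$ genuinely occur among the solutions of $N$ used to test integrality of $[P]_N$. This is exactly guaranteed by the standing assumption that the operators in question admit a full basis of Puiseux series solutions at every point of $C\cup\{\infty\}$.
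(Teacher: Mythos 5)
Your proof is correct, but it follows a genuinely different route from the paper's. The paper works on the solution side: it passes to an extension $E$ in which $V(N)$ has full dimension, uses the identity $V(N)=V(L)+V(M)$ (the ``no extraneous solutions'' property of the lclm recalled at the end of Sect.~2), picks a witness $h\in V(N)$ with $P\cdot h\neq 0$ (resp.\ $P\cdot h$ not constant), and decomposes $h=f+g$ with $f\in V(L)$, $g\in V(M)$ to transfer the property to one of the summands. You instead argue entirely on the ideal side, via $\langle N\rangle=\langle L\rangle\cap\langle M\rangle$: nonvanishing of $[P]_N$ means $P\notin\langle L\rangle\cap\langle M\rangle$, and non-constancy means $DP\notin\langle L\rangle\cap\langle M\rangle$, each of which immediately singles out one of the two factors; complete integrality transfers trivially because every Puiseux solution of $L$ (or $M$) is one of $N$. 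Your version avoids both the auxiliary extension field and the nontrivial fact that the lclm acquires no extraneous solutions, at the modest cost of justifying the intersection identity (a one-line right-division argument in $K[D]$), and it handles the ``not a constant'' case with the same mechanism as the ``nonzero'' case rather than by a separate propagation argument, which is arguably cleaner than the paper's closing sentence. Both proofs are complete; yours is the more self-contained of the two.
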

\begin{proof}
  Let $[P]_{N}$ be a completely integral element of $K[D]/\langle N\rangle$.
  Let $E$ be an extension of $K$ such that $V(N) \subseteq E$ has dimension $\ord(N)$.

  Note that by definition of the lclm, both equivalence classes $[P]_{L}$ and $[P]_{M}$ are well-defined.
  Since $V(N) = V(L) + V(M)$, both $[P]_{L}$ and $[P]_{M}$ are completely integral.

  If $[P]_{N}$ is non-zero, there exists $h \in V(N)$ such that $P\cdot h \neq 0$.
  Therefore there exist $f \in V(L)$ and $g \in V(M)$ such that $h = f+g$ and $P\cdot f + P \cdot g \neq 0$.
  So at least one of $P \cdot f$ and $P \cdot g$ is nonzero, implying respectively that $[P]_{L}$ or $[P]_{M}$ is nonzero.

  The additional property that $P$ is not a constant similarly propagates to at least one of the summands.
\end{proof}

In view of this negative result, it is remarkable that taking symmetric products can produce
pseudoconstants. For example, the function considered in Example~\ref{ex:product_2F1} is a
product of an algebraic function and a hypergeometric function. The linear operator which
annihilates only the hypergeometric function (without the algebraic function multiplier)
does not have a pseudoconstant.
If the given operator $L$ has no pseudoconstants, we can thus ask whether there is an operator
$M$ with only algebraic solutions such that $L\otimes M$ has pseudoconstants.
Of course, as long as nobody tells us how to choose~$M$, this observation is not really helpful.
What we can easily do however is to multiply the solutions of $L$ with each other.
It turns out that this is sometimes sufficient.

\begin{ex}
  Consider the operator
  \[
    \textstyle
    L = \bigl(x^2 - x\bigr) D^{2} + \bigl(\frac{49}{6} x - \frac{7}{3}\bigr) D + 12
  \]
  annihilating the hypergeometric function ${}_{2}F_{1}\bigl(\frac{9}{2},\frac{8}{3}; \frac{7}{3}; x\bigr)$.
  The operator does not have a pseudoconstant.
  However, the operator $L^{\otimes 2}$ does have a pseudoconstant
  \begin{equation}
    \label{eq:9}
    \alpha(x) D^{2} + \beta(x) D + \gamma(x)
  \end{equation}
  where $\alpha$, $\beta$ and $\gamma$ are polynomials in $x$, with respective degree $11$, $10$ and $9$.
  By Theorem~\ref{thm:sympow_implies_trans} below, this implies that $L$ has at least one transcendental solution. 
\end{ex}

\begin{ex}
  \label{ex:product_2F1_noproduct}
  Consider the operator
  \[
    \textstyle
    L = \bigl(x^2 - x\bigr) D^{2} + \bigl(\frac{65}{24} x - \frac{7}{6}\bigr) D + \frac{35}{48}
  \]
  annihilating the hypergeometric function ${}_{2}F_{1}\bigl(\frac{7}{8},\frac{5}{6}; \frac{7}{6}; x\bigr)$.
  This is the hypergeometric function appearing in Example~\ref{ex:product_2F1}.
  
  The operator does not have a pseudoconstant.
  However, the operator $L^{\otimes 5}$ does have the pseudoconstant $[x(x-1)^3]$.
  By Theorem~\ref{thm:sympow_implies_trans} below, this implies that all nonzero solutions of $L$ are transcendental.

  The exponents of the solutions of $L$ at its singularities are:
  \begin{equation}
    \label{eq:5}
    \begin{array}{rcc}
      (0) & -\frac{1}{6} & 0\\[1ex]
      (1) & -\frac{13}{24} & 0\\[1ex]
      (\infty) & \frac{5}{6} & \frac{7}{8} 
    \end{array}
  \end{equation}
  Multiplying all the solutions by $x^{1/6}(x-1)^{13/24}$ allows us to clear the poles at $0$ and $1$, without creating a pole at infinity: the exponents at infinity become $\frac{5}{6}-\frac{1}{6}-\frac{13}{24} = \frac{1}{7}$ and $\frac{7}{8}-\frac{1}{6}-\frac{13}{24} = \frac{1}{8}$, both non-negative.
  This confirms the observation in Example~\ref{ex:product_2F1}.

  The presence of rational exponents in $x^{1/6}(x-1)^{13/24}$ means that it does not qualify as a pseudoconstant with our definition.
  However, considering symmetric powers allows us to clear those denominators.
  First, observe that the lowest exponents of the solutions of $L^{\otimes s}$ are $-\frac{1}{6}s$ at $0$, $-\frac{13}{24}s$ at $1$ and $\frac{5}{6}s$ at infinity.
  We are looking for a pseudoconstant of the form $[x^{a}(x-1)^{b}]$ with $a,b$ integers.
  Multiplying by such an element adds $a$ to the exponent at $0$, $b$ to the exponent at $1$, and subtracts $a+b$ from the exponent at infinity.
  The complete integrality condition thus translates into the following inequalities:
  \begin{align}
    \textstyle 0 \leq -\frac{1}{6}s + a  &&
    \textstyle 0 \leq -\frac{13}{24}s + b  &&
    \textstyle 0 \leq  \frac{5}{6}s - a -b. 
    \label{eq:8}
  \end{align}
  The solutions, for $s$ in $\{1,\dots,6\}$, are represented in Figure~\ref{fig:solutions_alg_mult}.
  The smallest value of $s$ for which there is an integer solution is $5$, and we recover the pseudoconstant $[x(x-1)^{3}] = [x^{4}-3x^{3}+3x^{2}-x]$ for $L^{\otimes 5}$. 
\end{ex}

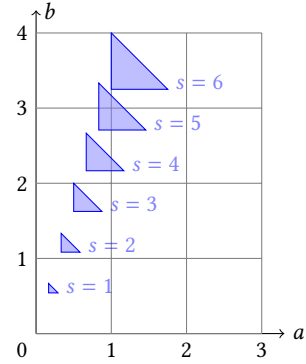
\begin{figure}
  \centering
    \upshape
    \begin{tikzpicture}[x=1cm,y=1cm]
      \draw[->] (0,0) -- (3.3,0) node[right] {$a$};
      \draw[->] (0,0) -- (0,4.3) node[right] {$b$};
      \draw[gray] (0,0) grid[step=1] (3,4);
      \foreach \a in {1, 2, 3} {\node[below] at (\a,0) {\a}; }
      \foreach \b in {1,...,4} {\node[left] at (0,\b) {\b}; }
      \node[below left] at (0,0) {0};
      
      \foreach \s in {1,..., 6} {
        \path[draw=blue, fill=blue!50!white, fill opacity=0.5, text=blue, text opacity=1]
        (\s*1/6,\s*13/24) 
        -- (\s*7/24,\s*13/24) node[anchor=base west] {$s=\s$}
        -- (\s*1/6,\s*4/6)
        -- cycle;
      }
    \end{tikzpicture}
    \caption{Solutions of the system~\eqref{eq:8} for $s$ in $\{1,\dots,6\}$}
    \label{fig:solutions_alg_mult}
\end{figure}

\begin{thm}
  \label{thm:sympow_implies_trans}
  Let $L \in K[D]$ be a differential operator.
  Suppose that for some $s\in\set N$ the symmetric power $L^{\otimes s}$ has a pseudoconstant.
  Then $L$ has at least one transcendental solution.
\end{thm}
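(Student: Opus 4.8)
The plan is to reduce to the transcendence criterion already established for pseudoconstants (the theorem stated just after Definition~\ref{def:pseudoconstants}), applied not to $L$ itself but to the operator $L^{\otimes s}$. The bridge I need is that the property of having only algebraic solutions is inherited by symmetric powers; once this is available, a short argument by contradiction finishes the proof.

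First I would establish this transfer of algebraicity. Suppose, for a contradiction, that $L$ has only algebraic solutions, so that there is an algebraic extension $E$ of $K$ in which $\dim V(L) = \ord(L)$. Writing $L^{\otimes s} = L^{\otimes(s-1)} \otimes L$ and invoking the fact recalled at the end of Section~\ref{sec:prelim} — that when both factors have full-dimensional solution spaces, the solution space of the symmetric product is exactly the span of the products $fg$, with no extraneous solutions — an induction on $s$ shows that $\dim V(L^{\otimes s}) = \ord(L^{\otimes s})$ and that $V(L^{\otimes s})$ is spanned by the $s$-fold products $f_1 \cdots f_s$ with $f_1, \dots, f_s \in V(L)$. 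Since each $f_i$ lies in the algebraic extension $E$ and algebraic functions are closed under multiplication, every such product is algebraic; hence all of $V(L^{\otimes s})$ consists of algebraic functions, i.e., $L^{\otimes s}$ too has only algebraic solutions.

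Now I would apply the pseudoconstant criterion to $L^{\otimes s}$: the existence of a pseudoconstant in $K[D]/\langle L^{\otimes s}\rangle$ forces $L^{\otimes s}$ to admit a transcendental solution, contradicting the conclusion of the previous paragraph. Hence the assumption fails and $L$ must admit at least one transcendental solution.

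The main obstacle is the inductive step showing that $V(L^{\otimes s})$ consists \emph{entirely} of products of solutions of $L$, rather than merely containing such products. In general the inclusion $V(L)\cdot V(M) \subseteq V(L \otimes M)$ is proper, and the symmetric power could carry extraneous — possibly transcendental — solutions. It is exactly the full-dimensionality hypothesis $\dim V(L) = \ord(L)$, supplied here by the assumption that $L$ has only algebraic solutions, that excludes this and guarantees $\dim V(L^{\otimes i}) = \ord(L^{\otimes i})$ at each stage of the induction. Everything else is a direct appeal to results already in hand.
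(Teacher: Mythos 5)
Your proposal is correct and follows essentially the same route as the paper: the paper's two-line proof also rests on the fact that $V(L^{\otimes s})$ is spanned by $s$-fold products of solutions of $L$ and then applies the pseudoconstant criterion to $L^{\otimes s}$; your argument is simply the contrapositive of the paper's final implication, with the no-extraneous-solutions induction made explicit. No gap.
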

\begin{proof}
  The solution space of $L^{\otimes s}$ is spanned by all products of $s$ solutions of $L$.
  The existence of a pseudoconstant in $K[D]/\langle L^{\otimes s} \rangle$ proves that at least one solution of $L^{\otimes s}$ is transcendental, and therefore at least one solution of $L$ is transcendental.
\end{proof}

In other words, a pseudoconstant for $L^{\otimes s}$ can be viewed as a transcendence certificate for~$L$.
As shown by the previous examples, such a certificate may exist even if $L$ itself does not have pseudoconstants.
So it is worthwhile to search for pseudoconstants of symmetric powers.
As shown by the following theorem, we cannot increase our chances to find a pseudoconstant any further by adding
some rational solutions to the solution space of~$L$.

\begin{prop}
  \label{prop:lclm_D_pseudoconstants}
  Let $M\in K[D]$ be an operator that has only solutions in~$K$, let $L\in K[D]$, and let $s\in\set N$.
  If $\lclm(L,M)^{\otimes s}$ has a pseudoconstant then there is a $d\in\{1,\dots,s\}$ such that $L^{\otimes d}$
  has a pseudoconstant.
\end{prop}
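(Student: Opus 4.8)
The plan is to work in a differential extension $E$ of $K$ in which the solution spaces of $L$, $M$, and $N=\lclm(L,M)$ all attain full dimension, so that $V(N)=V(L)+V(M)$ and each symmetric power has its solution space spanned by products of solutions, as recorded at the end of Section~\ref{sec:prelim}. The key structural observation is that, since every solution of $N$ has the form $f+g$ with $f\in V(L)$ and $g\in V(M)\subseteq K$, expanding a product of $s$ solutions of $N$ and sorting the factors into the $L$-part and the $M$-part shows that $V(N^{\otimes s})$ is spanned by elements of the shape $q\cdot h$, where $q\in K$ is a product of $s-d$ solutions of $M$ and $h\in V(L^{\otimes d})$ for some $d\in\{0,1,\dots,s\}$. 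This decomposition is really a grading of $V(N^{\otimes s})$ by the number $d$ of $L$-factors.

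Next I would exploit the operator identity $P\cdot(qh)=(Pq)\cdot h$, valid for any $q\in K$, which says that multiplying by the rational function $q$ and then applying $P$ is the same as applying the operator $Pq\in K[D]$; this is a direct Leibniz computation. Fix $d\geq 1$ and a product $q$ of $s-d$ solutions of $M$. For every $h\in V(L^{\otimes d})$ we have $qh\in V(N^{\otimes s})$, so complete integrality of $[P]_{N^{\otimes s}}$ forces $(Pq)\cdot h=P\cdot(qh)$ to be pole-free at every point of $C\cup\{\infty\}$. Hence $[Pq]_{L^{\otimes d}}$ is completely integral for each such $d$ and $q$. This reduces the proposition to exhibiting a single pair $(d,q)$ with $d\geq 1$ for which $[Pq]_{L^{\otimes d}}$ fails to be a constant.

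For the non-constancy, I would take a witness $w_0\in V(N^{\otimes s})$ with $P\cdot w_0$ not constant and decompose it along the grading as $w_0=\sum_{d=0}^{s}t_d$ with $t_d=\sum_\alpha q_\alpha h_\alpha$, where $q_\alpha$ is a product of $s-d$ solutions of $M$ and $h_\alpha\in V(L^{\otimes d})$. Then $P\cdot w_0=\sum_{d,\alpha}(Pq_\alpha)\cdot h_\alpha$, and every summand is completely integral by the previous step. The main obstacle is the possibility that all of the non-constant behaviour is concentrated in the purely rational part $d=0$; I would dispose of this by noting that $t_0\in K$, so $P\cdot t_0\in K$ is a \emph{rational} completely integral function and therefore lies in $C$ --- this is Proposition~\ref{prop:alghaspole} applied to the (trivially algebraic) function $P\cdot t_0$. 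Consequently $\sum_{d\geq 1,\alpha}(Pq_\alpha)\cdot h_\alpha$ is non-constant, so at least one summand $(Pq_\alpha)\cdot h_\alpha$ with $d=d_\alpha\geq 1$ is non-constant. For that $\alpha$, the class $[Pq_\alpha]_{L^{\otimes d_\alpha}}$ is completely integral and not a constant, i.e.\ a pseudoconstant of $L^{\otimes d_\alpha}$ with $d_\alpha\in\{1,\dots,s\}$, as claimed. The remaining verifications --- the operator identity and the well-definedness of the grading --- are routine and present no difficulty.
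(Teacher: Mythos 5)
Your argument is correct and ends at the same certificate as the paper's proof --- a class $[Pq]_{L^{\otimes d}}$ obtained by absorbing a product $q$ of rational solutions of $M$ into the operator $P$ --- but it gets there by a noticeably more self-contained route. The paper first writes $\lclm(L,M)^{\otimes s}=\lclm\bigl(L^{\otimes s},\,L^{\otimes(s-1)}\otimes M,\dots,M^{\otimes s}\bigr)$ and then invokes Lemma~\ref{lem:lclm_pseudoconstants} (iterated over the factors) to transport the pseudoconstant to a single mixed power $L^{\otimes d}\otimes M^{\otimes(s-d)}$ before dividing out the rational part; you bypass both the structural identity and the lemma by expanding a witness solution of $N^{\otimes s}$ into terms $q_\alpha h_\alpha$ and arguing summand by summand. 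A welcome byproduct of your version is that the case $d=0$ is disposed of explicitly: $P\cdot t_0$ is a pole-free rational function, hence a constant, which is exactly the point that forces $d\geq1$ and which the paper's proof leaves implicit when it asserts $d\in\{1,\dots,s\}$. What the paper's route buys is the reuse of Lemma~\ref{lem:lclm_pseudoconstants} as a packaged tool; what yours buys is independence from the lclm machinery at the cost of slightly more bookkeeping. Two small points you should make explicit when writing this up: non-constancy of the single value $(Pq_\alpha)\cdot h_\alpha$ suffices to conclude that $[Pq_\alpha]_{L^{\otimes d_\alpha}}$ is not a constant by part~(1) of Proposition~\ref{prop:constant}; and complete integrality must be verified against all local Puiseux series solutions of $L^{\otimes d}$ at each point $\xi\in C\cup\{\infty\}$, which goes through because $q$ is rational, so $qh$ is again a local Puiseux series solution of $N^{\otimes s}$ at~$\xi$.
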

\begin{proof}
  First note that 
  \[
    L_s := \lclm(L,M)^{\otimes s} =
    \lclm\bigl(L^{\otimes s}, L^{\otimes (s-1)} \otimes M, \dots, M^{\otimes s}\bigr).
  \]
  By Lemma~\ref{lem:lclm_pseudoconstants}, if $[P]_{L_s}$ is a pseudoconstant, then there
  exists $d\in\{1,\dots,s\}$ such that $[P]_{L^{\otimes d}\otimes M^{\otimes(d-s)}}$ is
  also a pseudoconstant.

  This means that for every Puiseux series solution $f$ of $L$ at some point $\xi\in C\cup\{\infty\}$
  and every solution $r\in C(x)$ of $M$ we have that $P\cdot(r^{d-s}f^d)$ is integral, and
  that for at least one $r$ and one~$f$, the quantity $P\cdot(r^{d-s}f^d)$ is not a constant.
  Fixing one such solution $r\in C(x)\setminus\{0\}$ of~$M$, it follows that $Pr^{d-s}$ is
  a completely integral element of $K[D]/\<L^{\otimes d}>$ and that $[Pr^{d-s}]_{L^{\otimes d}}$
  is not a constant. Thus $L^{\otimes d}$ has the pseudoconstant $[Pr^{d-s}]_{L^{\otimes d}}$.
\end{proof}

We have not been able to answer the following question: 

\begin{question}\label{q}
  Is it true that for every operator $L$ with at least one transcendental solution there exists
  an $s\in\set N$ such that $L^{\otimes s}$ has a pseudoconstant?
\end{question}

If the answer to Question~\ref{q} is yes, then this fact in combination
with Alg.~\ref{alg:algsols} would yield a new decision procedure for the existence of transcendental solutions.
We could simply search in parallel for $s=1,2,3,\dots$ for an algebraic solution of $L$ of degree $s$
and a pseudoconstant of $L^{\otimes s}$. Exactly one of these parallel threads would have to terminate
after a finite number of steps.

A natural idea to prove the existence of pseudoconstants of $L^{\otimes s}$
for sufficiently large~$s$ is to show that the linear system, which emerges
from a search for pseudoconstants via the linear algebra approach,
has more variables than equations for sufficiently large~$s$.
Unfortunately, this does not seem to be the case: indeed, if $R(s)$ is the order of $L^{\otimes s}$, the ansatz~\eqref{eq:1} has $\Theta(NR(s))$ undetermined coefficients.
As for the number of equations, it is equal to the number of series coefficients to set to zero: for each series solution $f_{i}$ ($i \in \{1,\dots,R(s)\}$), the valuation of $q(f_{i})$ can be as low as $-N$, for a total of $\Theta(NR(s))$ equations.

The following example can perhaps be considered as some piece of empirical evidence that the
answer to Question~\ref{q} is no.
On the other hand, we can show (Prop.~\ref{prop:alg-constants}) that for an operator $L$ with only algebraic
solutions there is always an $s$ such that $L^{\otimes s}$ has a constant (but of course no
pseudoconstant), and this could be considered as some piece of evidence that the answer to
Question~\ref{q} may be yes.

\begin{ex}
  Consider the operator
  \[
  \textstyle
  \bigl(x^{2} - x\bigr) D^{2} + \bigl(\frac{164}{15} x - \frac{16}{3}\bigr) D + \frac{1403}{60},
  \]
  which annihilates the hypergeometric function ${}_{2}F_{1}\bigl(\frac{61}{10},\frac{23}{6}; \frac{16}{3}; x\bigr)$.
  Thanks to Schwarz' classification, we know that the operator has no algebraic solutions. 
  However, an exhaustive search using integral bases could not find a completely integral element
  for $L^{\otimes s}$ for any $s\leq 6$, and a heuristic search using linear algebra could not
  find one for any $s\leq 30$.
\end{ex}

\begin{lem}\label{lem:ratsolmakesconstant}
  Let $M\in K[D]$ and let $q\in K$ be such that $M\cdot q\neq0$.
  Then $L:=\lclm(qD-q',M)$ has a nonzero constant.
\end{lem}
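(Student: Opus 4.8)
The plan is to exhibit an explicit nonzero constant and then verify it with Proposition~\ref{prop:constant}. First I would analyze the two factors of the lclm. The first-order operator $qD-q'$ has $q$ itself as a solution, since $(qD-q')\cdot q = qq'-q'q=0$, and its solution space is one-dimensional, spanned by $q$. Choosing an extension $E$ large enough that $V(qD-q')$, $V(M)$ and $V(L)$ all attain full dimension, the no-extraneous-solutions property of the lclm recalled in Section~\ref{sec:prelim} gives $V(L)=\Span_C(q)+V(M)$. The hypothesis $M\cdot q\neq0$ guarantees $q\notin V(M)$, so this sum is direct and $\dim_C V(L)=\ord(M)+1=\ord(L)$.

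The crucial observation is how $M$ itself acts on $V(L)$. Write $w:=M\cdot q\in K$, which is nonzero by hypothesis. Any $f\in V(L)$ decomposes uniquely as $f=cq+g$ with $c\in C$ and $g\in V(M)$, and then $M\cdot f = c\,(M\cdot q)+M\cdot g = cw$, since $M\cdot g=0$. This suggests setting $P:=\tfrac1w M\in K[D]$, for which $P\cdot f = \tfrac1w\,(M\cdot f)=c$ is a constant for every $f\in V(L)$. By Proposition~\ref{prop:constant}, $[P]_L$ is therefore a constant of $L$.

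It remains to check that this constant is nonzero. Taking $f=q$ (i.e. $c=1$, $g=0$) gives $P\cdot q = \tfrac1w\,(M\cdot q)=1\neq0$. If $[P]_L$ were the zero class, then $P=SL$ for some $S\in K[D]$, whence $P\cdot q = S\cdot(L\cdot q)=0$ because $q\in V(L)$; this contradiction shows $[P]_L\neq[0]_L$. Hence $L$ has the nonzero constant $[P]_L=\bigl[\tfrac1{M\cdot q}M\bigr]_L$.

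I do not expect a serious obstacle here: the only real content is guessing the representative $P=\tfrac1{M\cdot q}M$, after which everything reduces to the elementary computation $M\cdot f=cw$ on the summands. The one point requiring minor care is the justification that $V(L)$ decomposes as the direct sum $\Span_C(q)\oplus V(M)$ of the correct dimension, which rests precisely on the hypothesis $M\cdot q\neq0$ together with the no-extraneous-solutions property of the lclm from Section~\ref{sec:prelim}.
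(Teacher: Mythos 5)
Your proposal is correct and follows essentially the same route as the paper's proof: both exhibit the constant $[P]_L$ with $P=(M\cdot q)^{-1}M$, decompose $f\in V(L)$ as $cq+g$ with $g\in V(M)$, and compute $P\cdot f=c$. Your additional checks (directness of the sum and the explicit verification that $[P]_L\neq[0]_L$ via $P\cdot q=1$) are fine but not a different argument.
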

\begin{proof}
  Note that $V(L)=\Span(q)+V(M)$ and $u:=M\cdot q\neq0$. Consider $P:=u^{-1}M$.
  Every $f\in V(L)$ can be written as $f=cq+m$ for a $c\in C$ and an $m\in V(M)$.
  So $P\cdot f=u^{-1}(M\cdot m+cM\cdot q)=u^{-1}cu=c$.
  By Prop.~\ref{prop:constant} part~\ref{prop:constant:1}, it follows that $[P]$ is
  a nonzero constant of~$L$.
\end{proof}

\begin{prop}\label{prop:alg-constants}
  If $L\in K[D]$ has only algebraic solutions and
  $d$ is such that all the solutions of $L$ have a minimal polynomial of degree
  at most~$d$, then $L^{\otimes d}$ has a nonzero constant.
\end{prop}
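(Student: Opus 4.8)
The plan is to produce a nonzero \emph{rational} solution of $L^{\otimes d}$ and then feed it into Lemma~\ref{lem:ratsolmakesconstant} to obtain a nonzero constant. I would start from a solution $f\in V(L)$ whose minimal polynomial $m=y^{d}+a_{d-1}y^{d-1}+\cdots+a_{0}\in K[y]$ has degree exactly~$d$; such an $f$ is available once we read the hypothesis with $d$ equal to the maximal degree occurring among the solutions. As $m$ is irreducible and $f\neq0$, we have $m\neq y$ and hence $a_{0}\neq0$. Factoring $m=(y-\phi_{1})\cdots(y-\phi_{d})$ over $\bar K$, the roots $\phi_{1}=f,\phi_{2},\dots,\phi_{d}$ are the conjugates of~$f$, and since $L\in K[D]$ each $K$-automorphism $\sigma$ satisfies $L\cdot\sigma(f)=\sigma(L\cdot f)=0$; thus every $\phi_{i}$ lies in $V(L)$.

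Next I would form the product $\rho:=\phi_{1}\cdots\phi_{d}=(-1)^{d}a_{0}$, which lies in $K\setminus\{0\}$ and is a product of exactly $d$ solutions of~$L$. Because the solution space of $L^{\otimes d}$ is spanned by all products of $d$ solutions of~$L$, this gives $\rho\in V(L^{\otimes d})$, so $L^{\otimes d}$ has a nonzero rational solution. Since $L$ has only algebraic solutions, so does $L^{\otimes d}$, whence $\dim V(L^{\otimes d})=\ord(L^{\otimes d})=:R$.

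To invoke Lemma~\ref{lem:ratsolmakesconstant} I must present $L^{\otimes d}$ as an lclm having $\rho D-\rho'$ as a right factor. I would extend $\rho$ to a basis $\rho=g_{1},g_{2},\dots,g_{R}$ of $V(L^{\otimes d})$ and let $M$ be the monic operator of order $R-1$ whose solution space is $\Span(g_{2},\dots,g_{R})$. The order-one operator $\rho D-\rho'$ has solution space $\Span(\rho)$, and both factors have solution spaces of full dimension meeting only in~$0$; by the no-extraneous-solutions property of the lclm recalled in Sect.~\ref{sec:prelim}, $\lclm(\rho D-\rho',M)$ has solution space $V(L^{\otimes d})$ and order~$R$, so it equals the monic operator $L^{\otimes d}$. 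Moreover $\rho\notin V(M)$ forces $M\cdot\rho\neq0$, so Lemma~\ref{lem:ratsolmakesconstant} applies with $q=\rho$ and yields the nonzero constant $[(M\cdot\rho)^{-1}M]$ of $L^{\otimes d}$.

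The step I expect to be delicate is the very first one: the construction genuinely needs a product of \emph{exactly} $d$ solutions in order to land in $V(L^{\otimes d})$, so it requires a solution whose minimal polynomial has degree precisely~$d$. If $d$ only bounds the degrees from above without being attained, the conclusion can fail; for instance $L=D-\frac{1}{3x}$ has the solutions $c\,x^{1/3}$, all of degree at most~$4$, yet $L^{\otimes4}$ annihilates $x^{4/3}$ and has neither a rational solution nor a nonzero constant. This is why the statement must be understood with $d$ equal to the maximal degree among the solutions, which is exactly the case in which a degree-$d$ solution exists.
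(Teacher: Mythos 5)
Your strategy is essentially the paper's: produce a nonzero rational solution $\rho$ of $L^{\otimes d}$ as the product of the $d$ conjugates of a solution of degree exactly~$d$, split $L^{\otimes d}$ as $\lclm(\rho D-\rho',M)$, and apply Lemma~\ref{lem:ratsolmakesconstant}. The gap is in the construction of~$M$. You extend $\rho$ to an arbitrary basis $g_1=\rho,g_2,\dots,g_R$ of $V(L^{\otimes d})$ and take $M$ to be the monic operator of order $R-1$ annihilating exactly $\Span(g_2,\dots,g_R)$. Such an operator exists via the Wronskian construction, but its coefficients lie in the differential field generated by $g_2,\dots,g_R$; they lie in $K=C(x)$ only if $\Span(g_2,\dots,g_R)$ is stable under the Galois action, which an arbitrary vector-space complement of $\Span(\rho)$ is not. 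Lemma~\ref{lem:ratsolmakesconstant} assumes $M\in K[D]$, and the object it produces, $[(M\cdot\rho)^{-1}M]$, must be an element of $K[D]/\langle L^{\otimes d}\rangle$ --- i.e.\ an operator with coefficients in $K$ --- in order to be a constant in the sense of Definition~\ref{def:pseudoconstants}. As written, you only obtain an operator $P$ over a proper algebraic extension of $K$ with $P\cdot f$ constant for all $f$, which is not yet the assertion of the proposition.

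This is exactly the point the paper's proof is organized around: it builds the complement of $\Span(\rho)$ from Galois-stable pieces, namely the solution spaces of the minimal annihilating operators $M_i$ of suitably chosen solutions $f_i$, each spanned by $f_i$ together with all its conjugates and hence defined over $K$, and then takes $M=\lclm(M_1,\dots,M_n)\in K[D]$. Your argument can also be repaired without changing $V(M)$ by averaging: with $P=(M\cdot\rho)^{-1}M$ one has $P\cdot\rho=1$ and, for every $\sigma$ in the Galois group $G$ of a normal closure, $\sigma(P)\cdot f=\sigma\bigl(P\cdot\sigma^{-1}(f)\bigr)$ is a constant for every $f\in V(L^{\otimes d})$; hence $\tilde P=\frac{1}{|G|}\sum_{\sigma\in G}\sigma(P)$ lies in $K[D]$, maps $\rho$ to $1$, and is the desired nonzero constant. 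Either way, the definability over $K$ of $M$ (or of the resulting constant) needs an argument that your write-up does not supply. Your final remark, on the other hand, is well taken: both your proof and the paper's need a product of exactly $d$ solutions lying in $K\setminus\{0\}$, so the hypothesis must be read with $d$ attained, and your example $L=D-\frac{1}{3x}$ with $d=4$ correctly shows that a non-attained upper bound does not suffice.
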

\begin{proof}
  Since $L$ has only algebraic solutions, also $L^{\otimes d}$ has only algebraic solutions.
  Moreover, $L^{\otimes d}$ has at least one nonzero rational function solution~$q$
  (e.g., the product of all the conjugates of some algebraic solution of~$L$).
  If $f$ is a solution of $L^{\otimes d}$, then so are all the conjugates of~$f$,
  because $L^{\otimes d}$ has coefficients in~$K$.
  The solution space of the minimal order annihilating operator of $f$ is generated
  by $f$ and its conjugates and therefore a right factor of~$L^{\otimes d}$.

  Let $f_1$ be a solution of $L^{\otimes d}$ which does not belong to $\Span(q)$,
  and let $M_1$ be a minimal order annihilating operator of~$f_1$.
  For $n=1,2,\dots$, let $f_n$ be a solution of $L^{\otimes d}$ which does not belong to $\Span(q)+V(M_1)+\cdots+V(M_{n-1})$,
  and let $M_i$ be a minimal order annihilating operator of~$f_n$, until we have
  $V(L^{\otimes d})=\Span(q)+V(M_1)+\cdots+V(M_n)$.
  At this stage, we have
  \[
  L^{\otimes d}=\lclm(qD-q',\lclm(M_1,\dots,M_n)),
  \]
  and since $\lclm(M_1,\dots,M_n)\cdot q\neq0$ by the choice of $M_1,\dots,M_n$,
  Lemma~\ref{lem:ratsolmakesconstant} applies.
  The claim follows.
\end{proof}

\section{Conclusion}

We propose the notion of a \emph{transcendence certificate} for any kind of artifact
whose existence implies that a given differential operator has at least one
transcendental solution.  Simple transcendence certificates are logarithmic and
exponential singularities. \emph{Pseudoconstants} introduced in
Def.~\ref{def:pseudoconstants} can also serve as transcendence certificates. We
have given examples of operators that have no logarithmic or exponential
singularities but that do have pseudoconstants.

We have also given examples of operators that have no pseudoconstants even
though they have transcendental solutions. To such operators, we can try to
apply transformations that preserve the existence of transcendental solutions
but may lead to the appearance of pseudoconstants. In particular, as shown in
Sect.~\ref{sec:expand-search-space}, it can happen that an operator $L$ has no
pseudoconstants but some symmetric power $L^{\otimes s}$ of $L$ does. A
pseudoconstant of $L^{\otimes s}$ suffices to certify the existence of a
transcendental solution of~$L$. An open question (Question~\ref{q}) is whether
the existence of transcendental solutions of $L$ implies the existence of an $s$
such that $L^{\otimes s}$ has pseudoconstants. We would be very interested in an
answer to this question.

There are further possibilities to transform an operator with no pseudoconstants
to one that may have some. For example, we could try to exploit that the
composition of a D-finite function with an algebraic function is always D-finite.
If $f$ is D-finite and $g$ is algebraic, then $f\circ g$ is algebraic if and
only if $f$ is algebraic, thus a pseudoconstant for an annihilating operator
of $f\circ g$ could serve as a transcendence certificate for an annihilating
operator of~$f$. Note that unlike the transformations considered in this paper,
the composition can not only remove singularities but also create new ones.
We have not found an example where this process reveals new pseudoconstants.

In another direction, we could try to weaken the requirements of Def.~\ref{def:pseudoconstants}. According
to our definition, $[P]_L$ is a pseudoconstant if \emph{every} local solution $f$ of $L$
is such that $P\cdot f$ has nonnegative valuation. For a transcendence certificate,
it would suffice to have \emph{one} global solution $f$ of $L$ (a complex
function defined on a Riemann surface) which is not constant and has no pole.
If we relax Def.~\ref{def:pseudoconstants} accordingly, it may be that additional
operators would have pseudoconstants. However, we would no longer know how to decide
the existence of pseudoconstants for a given operator.

\paragraph{Acknowledgments}
We are grateful to Alin Bostan and Bruno Salvy for their valuable feedback on
the topic of this paper, after a talk at JNCF 2023. We also thank the
anonymous referees for their suggestions to improve the paper.


\end{document}